\newtheorem{theorem}{Theorem}[section]
\newtheorem{lemma}{Lemma}[section]
\newcommand{\vs}{{\em vs.}\xspace}
\newcommand{\ie}{{\em i.e.,}\xspace}
\newcommand{\eg}{{\em e.g.,}\xspace}
\newcommand{\etc}{{\em etc.}}
\newcommand{\aka}{{\em a.k.a.}}
\newcommand{\PEFRRR}{$\mathbb{PEF}\_3+$}
\newcommand{\PEFRR}{$\mathbb{PEF}\_2$}
\newcommand{\PEFR}{$\mathbb{PEF}\_1$}
\title{Computability of Perpetual Exploration\\ in Highly Dynamic Rings}
\author{\IEEEauthorblockN{Marjorie Bournat\IEEEauthorrefmark{1}, Swan Dubois\IEEEauthorrefmark{1}, and Franck Petit\IEEEauthorrefmark{1}}
\IEEEauthorblockA{\IEEEauthorrefmark{1}UPMC Sorbonne Universit\'es, CNRS, Inria, LIP6 UMR 7606, France}}
\title{Computability of Perpetual Exploration\\ in Highly Dynamic Rings\thanks{This 
work was performed within Project ESTATE (Ref. ANR-16-CE25-0009-03), supported by 
French state funds managed by the ANR (Agence Nationale de la Recherche).}}
\author{
Marjorie Bournat\footnotemark[2]
\and
Swan Dubois\footnotemark[2]
\and
Franck Petit\footnotemark[2]
}
\date{}
\begin{document}

\maketitle

   \begin{abstract}

We consider systems made of autonomous mobile robots evolving in highly dynamic discrete environment
\ie graphs where edges may appear and disappear unpredictably without any recurrence, stability,  nor periodicity assumption. 
Robots are uniform (they execute the same algorithm), they are anonymous (they are devoid of any observable ID), 
they have no means allowing them to communicate together,
they share no common sense of direction, and they have no global knowledge related to the size of the environment. 
However, each of them is endowed with persistent memory and is able to detect whether it stands alone at its current location. 
A highly dynamic environment is modeled by a graph such that its topology keeps continuously changing over time.   
In this paper, we consider only dynamic graphs in which nodes are anonymous, each of them is infinitely often reachable from 
any other one, and such that its underlying graph (\ie
the static graph made of the same set of nodes and that includes all edges that are present at least once over time) 
forms a ring of arbitrary size.  

In this context, we consider the fundamental problem of perpetual exploration: each node is required to be 
infinitely often visited by a robot. 
This paper 
analyzes the computability of this problem in (fully) synchronous settings, \ie we study 
the deterministic solvability of the problem with respect to the number of robots.
We provide three algorithms and two impossibility results that characterize, for any ring size, 
the necessary and sufficient number of robots to perform perpetual exploration of highly dynamic rings.
%

\textbf{\textit{Keywords}}: \textit{Highly dynamic graphs; evolving graphs;
perpetual exploration; fully-synchronous robots.}

\end{abstract}

   \newpage
   \section{Introduction}

We consider systems made of autonomous robots that are endowed with visibility sensors and motion actuators.
Those robots must collaborate to perform collective tasks, typically, environmental monitoring, large-scale 
construction, mapping, urban search and rescue, surface cleaning, risky area surrounding, patrolling, exploration 
of unknown environments, to quote only a few.  

Exploration belongs to the set of basic task components for many of the aforementioned applications. For instance, 
environmental monitoring, patrolling, search and rescue, and surface cleaning are all tasks requiring that robots 
(collectively) explore the whole area. To specify how the exploration is achieved, the so-called ``area'' is often 
considered as ``zoned area'' (\eg a building, a town, a factory, a mine, \etc) modeled by a finite graph where 
(anonymous) nodes represent locations that can be sensed by the robots, and edges represent the possibility for a 
robot to move from one location to the other. 

To fit various applications and environments, numerous variants of exploration have been studied in the literature, 
for instance, {\em terminating} exploration ---the robots stop moving after completion of the exploration of the whole graph~
\cite{FIPS07,DPT13,DLPT15}---, {\em exclusive perpetual} exploration ---every node is visited infinitely often, but
no two robots collide at the same node~\cite{BBMR08,BMPT10}---, {\em exploration with return} ---each robot comes 
back to its initial location once the exploration is completed~\cite{DFKP04}\break ---, \etc. Clearly, some of these variants may 
be mixed (\eg exclusive perpetual exploration \vs non exclusive terminating exploration) and either weakened or 
strengthened ({\em weak} perpetual exploration ---every node is visited infinitely often by at least one robot~
\cite{BDPT14}--- \vs~{\em strong} perpetual exploration ---every node is visited infinitely often by each robot---, \etc).
Note that all these instances of exploration are different problems in the sense that, in most of the cases, 
solutions for any given instance cannot be used to solve another instance. Also, some solutions are designed for 
specific graph topologies, \eg ring-shaped~\cite{FIPS07}, line-shaped~\cite{FIPS11}, tree-shaped~\cite{FIPS10}, and 
other for arbitrary network~\cite{CFMS10}. In this paper, we address the (non-exclusive weak version of the) 
{\em perpetual exploration} problem, \ie each node is visited infinitely often by a robot.

Robots operate in \emph{cycles} that include three phases: \emph{Look}, \emph{Compute}, and \emph{Move} (L-C-M).  The
Look phase consists in taking a snapshot of the (local) environment of robots using the visions capabilities offered by the 
sensors they are equipped with. The snapshot depends on the sensor capabilities with respect to environment.  
During the Compute phase, a robot computes a destination based on the previous observation. The Move phase simply 
consists in moving to this destination. Using L-C-M cycles, several models has been proposed in the literature, 
capturing various degrees of synchrony between robots~\cite{FPS12}. They are denoted by $\mathcal{FSYNC}$, 
$\mathcal{SSYNC}$, and $\mathcal{ASYNC}$, from the stronger to the weaker. In $\mathcal{FSYNC}$ ({\em fully 
synchronous}), all robots execute the L-C-M cycle synchronously and atomically.  In $\mathcal{SSYNC}$ ({\em semi-
synchronous}), robots are asynchronously activated to perform cycles, yet at each activation, a robot executes 
one cycle atomically. In $\mathcal{ASYNC}$ ({\em asynchronous}), robots execute L-C-M in a fully independent manner.

We assume robots having weak capabilities: they are \emph{uniform} ---meaning that all robots follow the same algorithm---,
they are \emph{anonymous} ---meaning that no robot can distinguish any two other robots---, they 
are \emph{disoriented} ---they have no coherent labeling of direction---, and they have no global knowledge related to the size 
of the environment. Furthermore, the robots have no (direct) means of communicating with each other. However, 
each of them is endowed with persistent memory and is able to detect whether it stands alone at its current location.

All the aforementioned contributions assume a {\em static} environment, \ie the graph topology explored by the robots 
does not evolve in function of the time. In this paper, we consider {\em dynamic} environments that may change over time, 
for instance, a transportation network, a building in which doors are closed and open over time, or streets that are 
closed over time due to work in process or traffic jam in a town. More precisely, we consider dynamic graphs in which 
edges may appear and disappear unpredictably without any stability, recurrence, nor periodicity assumption. However, 
to ensure that the problem is not trivially unsolvable, we made the assumption that each node is infinitely often 
reachable from any other one through a {\em temporal path} (\aka~{\em journey}~\cite{CFQS12}). The dynamic graphs satisfying this 
topological property are known as {\em connected-over-time} (dynamic) graphs~\cite{CFQS12}.   

{\em Related work.} Recent work~\cite{FMS09,IW11,IW13,IKW14,DDFS16} deal with the terminating exploration of dynamic 
graphs. This line of work restricts the dynamicity of the graph with various assumptions. In~\cite{FMS09} and~\cite{IW11}, the 
authors focus on periodically varying graphs, \ie the presence of each edge of the graph is periodic. In~
\cite{IW13,IKW14,DDFS16}, the authors assume that the graph is connected at each time instant and that there exists a stability 
of this connectivity in any interval of time of length $T$ (such assumption 
is known as $T$-interval-connectivity~\cite{KLO10}). In~\cite{IW13} and~\cite{DDFS16} (resp.~\cite{IKW14}), the authors restrict 
their study to the case where the underlying graph (\ie the static graph that includes all edges that are present 
at least once in the lifetime of the graph) forms a ring (resp. a cactus) of arbitrary size.

In~\cite{DDFS16}, the authors examine the impact of various factors (\eg at least one node is not anonymous, knowledge of the exact number of nodes, 
knowledge of an upper bound on the number of nodes, sharing of a common orientation, \etc) on the solvability of the terminating
exploration. In particular, they show that the degree of synchrony among the robot has a major impact. Indeed, they 
prove that, independently of other assumptions, exploration is impossible in $\mathcal{SSYNC}$ model (without extra 
synchronization assumptions). The proof of this result relies on the possibility offered to the adversary to
wake up each robot independently and to remove the edge that the robot wants to traverse at this time. Note that, 
by its simplicity, this impossibility result is applicable to any variante of the exploration problem. It is also 
independent of dynamicity assumptions. 

The first attempt to solve exploration in the most general dynamicity scenario (\ie connected-over-time assumption)
has been proposed in~\cite{BDD16}. The authors provide a protocol that deterministically solves the perpetual 
exploration problem. This protocol operates in any connected-over-time ring with three synchronous robots 
(accordingly to the aforementioned impossibility result in~\cite{DDFS16}). Further, the proposed protocol has the 
nice extra property of being self-stabilizing, meaning that regardless their arbitrary initial configuration, the 
robots eventually behave according to their specification, \ie eventually, they explore the whole network infinitely 
often. Note that the necessity of the assumption on the number of robots is left as an open question by this work. 
 


{\em Our contribution.} The main contribution of this paper is to close this question. Indeed, we analyze the 
computability of the perpetual exploration problem in connected-over-time (dynamic) rings, 
\ie we study the deterministic solvability of the problem with respect to the number of robots. According to the 
impossibility result in~\cite{DDFS16}, we restrict this study to the $\mathcal{FSYNC}$ model. As we do not consider 
self-stabilization (contrarily to~\cite{BDD16}), we assume that no pair of robots have a common initial 
location. Moreover, to ensure that the problem is not trivially solved in the initial configuration, we consider
that, $k$, the number of robots, is strictly smaller 
than $n$, the number of nodes of the dynamic graph.
In this context, we establish the necessary and sufficient number of robots to solve the perpetual exploration 
for any size of connected-over-time rings (see TABLE~\ref{table} for a summary). 
Note that a connected-over-time chain can be seen as a connected-over-time ring with a missing edge.  
So, our results are also valid on connected-over-time chains.

\begin{table}
\begin{center}
  \begin{tabular}{| c | c| c |}
      \hline
	 Number of Robots &Size of Rings & Results \\ \hline \hline
	
	 {3 and more} & $\geq$ 4 & Possible (Theorem~\ref{final_theorem})  \\ \hline
	
	\multirow{2}*{2} &  $>$ 3 & Impossible (Theorem~\ref{no_perpetual_exploration_two_robots})  \\
	\cline{2-3}  & = 3 & Possible (Theorem~\ref{th:algo2robots}) \\ \hline
	
	 \multirow{2}*{1} & $>$ 2 & Impossible (Theorem~\ref{no_perpetual_exploration_one_robot})    \\
	\cline{2-3}  & = 2 & Possible (Theorem~\ref{th:algo1robot}) \\ \hline
	    
  \end{tabular}
  \caption{Overview of the results} \label{table}
  \end{center}
\end{table}

In more details, we first provide an algorithm that perpetually explores, using a team of $k\geq 3$ robots, any 
connected-over-time ring of $n > k$ nodes. Then, we give two non-trivial impossibility results. 
We first show that two robots are not sufficient to perpetually explore a 
connected-over-time ring with a number of nodes strictly greater than three. Next, we show that
a single robot cannot perpetually explore a connected-over-time ring with a number of 
nodes strictly greater than two. Finally, we close the problem 
by providing an algorithm for each remaining cases (one robot in a $2$-node connected-over-time ring and
two robots in a $3$-node connected-over-time ring). 
%


%
%

{\em Outline of the paper.} In Section \ref{sec:model}, we present formally the model considered in the
remainder of the paper. Section \ref{sec:3robots} presents the algorithm to explore connected-over-time rings of size $n>k$ nodes
with $k\geq 3$ robots. The impossibility result and the algorithm for two robots are both presented in Section 
\ref{sec:2robots}.   The ones assuming a single robot are given in Section \ref{sec:1robot}. We conclude 
in Section \ref{sec:conclu}.

   \section{Model}\label{sec:model}

In this section, we present our formal model. This model is borrowed from the 
one of~\cite{BDD16} that proposes an extension of the classical model of robot networks 
in static graphs introduced in~\cite{KMP06} to the context of dynamic graphs.
 
\subsection{Dynamic graphs} In this paper, we consider the model of 
\emph{evolving graphs} introduced in \cite{XFJ03}. We hence consider the time as
discretized and mapped to $\mathbb{N}$. An evolving graph $\mathcal{G}$ is an
ordered sequence $\{G_{0}, G_{1}, \ldots\}$ of subgraphs of a given static graph 
$G=(V,E)$ such that, for any $i\geq 0$, we have $G_{i} = (V, E_{i})$. We say that
the edges of $E_{i}$ are \emph{present} in $\mathcal{G}$ at time $i$. 
The \emph{underlying graph} of $\mathcal{G}$, denoted $U_\mathcal{G}$, is the 
static graph gathering all edges that are present at least once in $\mathcal{G}$ 
(\ie $U_\mathcal{G}=(V,E_\mathcal{G})$ with
$E_\mathcal{G}=\bigcup_{i=0}^{\infty}E_i)$).
An \emph{eventual missing edge} is an edge of $E_\mathcal{G}$ such that there 
exists a time after which this edge is never present in $\mathcal{G}$. A 
\emph{recurrent edge} is an edge of $E_\mathcal{G}$ that is not eventually 
missing. The \emph{eventual underlying graph} of $\mathcal{G}$, denoted
$U_\mathcal{G}^\omega$, is the static graph gathering all recurrent edges of
$\mathcal{G}$ (\ie $U_\mathcal{G}^\omega=(V,E_\mathcal{G}^\omega)$ where 
$E_\mathcal{G}^\omega$ is the set of recurrent edges of $\mathcal{G}$).

In this paper, we chose to make minimal assumptions on the dynamicity of
our graph since we restrict ourselves on \emph{connected-over-time} evolving 
graphs. The only constraint we impose on evolving graphs of this class is that 
their eventual underlying graph is connected \cite{DKP15} (this is equivalent with
the assumption that each node is infinitely often reachable from another one
through a journey). In the following, we consider only connected-over-time 
evolving graphs whose underlying graph is an anonymous and unoriented ring of 
arbitrary size. Although the ring is unoriented, to simplify the presentation,
we, as external observers, distinguish between
the clockwise and the counter-clockwise (global) direction in the ring.

We introduce here some definitions that are used for proofs only.
From an evolving graph $\mathcal{G}=\{(V, E_0), (V, E_1),(V, E_2),$ $\ldots\}$, 
we define the evolving graph 
$\mathcal{G} \backslash \{(e_{1}, \tau_{1}), \ldots (e_{k}, \tau_{k})\}$ (with 
for any $i \in \{1, \ldots, k\}$, $e_{i} \in E$ and $\tau_{i} \subseteq \mathbb{N}$) as the 
evolving graph $\{(V, E_0'), (V, E_1'), (V, E_2'),\ldots\}$ such that:
$\forall t\in\mathbb{N},\forall e\in E_\mathcal{G},
e\in E_t' \Leftrightarrow e\in E_t\wedge(\forall i\in\{1,\ldots,k\}, e\neq e_i \vee t\notin \tau_i)$.
A node $u$ satisfies the property $OneEdge(u, t, t')$ if and only if an 
adjacent edge of $u$ is continuously missing from time $t$ to time $t'$ while 
the other adjacent edge of $u$ is continuously present from time $t$ to time 
$t'$. We define the distance between two nodes $u$ and $v$ (denoted $d(u,v)$)
by the length of a shortest path between $u$ and $v$ in the underlying graph.

\subsection{Robots} We consider systems of autonomous mobile entities called
robots moving in a discrete and dynamic environment modeled by an evolving graph
$\mathcal{G}=\{(V,E_0),(V,E_1)\ldots\}$, $V$ being a set of nodes representing 
the set of locations where robots may be, $E_i$ being the set of bidirectional 
edges representing connections through which robots may move from a location to
another one at time $i$. Robots are uniform (they execute the same algorithm), 
anonymous (they are indistinguishable from each other), and have a persistent 
memory (they can store local variables). The state of a robot at time $t$
corresponds to the value of its variables at time $t$. Robots are unable to 
directly communicate with each other by any means. Robots are endowed with 
local weak multiplicity detection meaning that they are able to detect if they
are alone on their current node or not, but they cannot know the exact number of 
co-located robots. When a robot is alone on its current node, we say that it is 
isolated. A \emph{tower} $T$ is a couple $(S, \theta)$, where $S$ 
is a set of robots ($|S| > 1$) and $\theta=[t_{s}, t_{e}]$ is an interval of
$\mathbb{N}$, such that all the robots of $S$ are located at a same node at each
instant of time $t$ in $\theta$ and $S$ or $\theta$ is maximal for this
property. We say that the robots of $S$ form the tower at time $t_{s}$ and that 
they are involved in the tower between time $t_{s}$ and $t_{e}$. Robots have no a
priori knowledge about the ring they explore (size, diameter, dynamicity\ldots).
Finally, each robot has its own stable chirality (\ie each robot is able to
locally label the two ports of its current node with \emph{left} and
\emph{right} consistently over the ring and time but two different robots may
not agree on this labeling). We assume that each robot has a variable $dir$ that
stores a direction (either \emph{left} or \emph{right}). Initially, this variable 
is set to $left$. At any time, we say that a robot points to \emph{left} 
(resp. \emph{right}) if its $dir$ variable is equal to this (local) direction.
Through misuse of language, we say that a robot points to an edge when 
this edge is connected to the current node of the robot by the port labeled
with its current direction. 
We say that a robot considers the clockwise (resp. counter-clockwise) direction
if the (local) direction pointed to by this robot corresponds to the (global)
direction seen by an external observer. 

\subsection{Execution} A configuration $\gamma$ of the system captures the 
position (\ie the node where the robot is currently located) and the state of 
each robot at a given time. Given an evolving graph 
$\mathcal{G}=\{G_{0}, G_{1}, \ldots\}$, an algorithm $\mathcal{A}$, and an
initial configuration $\gamma_0$, the execution $\mathcal{E}$ of $\mathcal{A}$
on $\mathcal{G}$ starting from $\gamma_0$ is the infinite sequence 
$(G_0,\gamma_0),(G_1,\gamma_1),(G_2,\gamma_2),\ldots$ where, for any $i\geq 0$, 
the configuration $\gamma_{i+1}$ is the result of the execution of a synchronous
round by all robots from $(G_i,\gamma_i)$ as explained below.

The round that transitions the system from $(G_i,\gamma_i)$ to
$(G_{i+1},\gamma_{i+1})$ is composed of three atomic and synchronous phases: 
Look, Compute, Move. During the Look phase, each robot gathers information about
its environment in $G_i$. More precisely, each robot updates the value of the 
following local predicates:
$(i)$ $ExistsEdge(dir)$ returns true if there is an adjacent edge at the 
current location of the robot on its direction $dir$, false otherwise; 
$(ii)$ $ExistsOtherRobotsOnCurrentNode()$ returns true if there is strictly more than  
one robot on the current node of the robot, false otherwise. 
We define the local environment of a robot at a given time as the combination of the 
values of $ExistsEdge(dir)$, $ExistsEdge(\overline{dir})$ (where $\overline{dir}$ 
is the opposite direction to $dir$), and $ExistsOtherRobotsOnCurrentNode()$ of 
this robot at this time. The view of the robot at this time gathers its state 
and its local environment at this time. During the 
Compute phase, each robot executes the algorithm $\mathcal{A}$ that may modify 
its variable $dir$ depending on its current state and on the values of the 
predicates updated during the Look phase. Finally, the Move phase consists of
moving each robot through one edge in the direction it points to if there exists
an edge in that direction, otherwise (\ie the edge is missing at that time)
the robot remains at its current node.

\subsection{Specification} We define a well-initiated execution as an execution
$(G_0,\gamma_0),(G_1,\gamma_1),(G_2,\gamma_2),\ldots$ such that $\gamma_0$ contains 
strictly less robots than the number of nodes of $\mathcal{G}$ and is towerless 
(\ie there is no tower in this configuration).

Given a class of evolving graphs $\mathcal{C}$, an algorithm $\mathcal{A}$
satisfies the perpetual exploration specification on $\mathcal{C}$ if and only 
if, in every well-initiated execution
of $\mathcal{A}$ on every evolving graph $\mathcal{G}\in\mathcal{C}$, every node of
$\mathcal{G}$ is infinitely often visited by at least one robot (\ie a robot is
infinitely often located at every node of $\mathcal{G}$). Note that this 
specification does not require that every robot visits infinitely often every 
node of $\mathcal{G}$.



   \section{With Three or More Robots}\label{sec:3robots}

This section is dedicated to the more general result: the perpetual exploration 
exploration of connected-over-time rings of size greater than $k$
with a team of $k \geq 3$ robots.

\subsection{Presentation of the Algorithm}

We first describe intuitively the key ideas of our algorithm. Remind that an 
algorithm controls the move of the robots through their variable direction. Hence,
designing an algorithm consists in choosing when we want a robot to keep its 
direction and when we want it to change its direction (in other words, turn back). 
The first idea of our algorithm is to require that a robot keeps its direction 
when it is not involved in a tower (Rule $1$). Using this idea, some towers are 
necessarily formed when there exists an eventual 
missing edge. Our algorithm reacts as follows to the formation of towers.
If at a time $t$ a robot does not move and forms a tower at
time $t + 1$, then the algorithm keeps the direction of the robot (Rule $2$).
In the contrary case (that is, at time $t$, the  robot moves and forms a tower 
at time t+1) it changes the direction of the robot (Rule $3$).

Let us now explain how the algorithm (Rules $1$, $2$, and $3$) enables the 
perpetual exploration of any connected-over-time ring. First, note that Rule $1$
alone is sufficient to perpetually explore connected-over-time rings without 
eventual missing edge provided that the robots never meet. The main property 
induced by Rules $2$ and $3$ is that any tower is broken in a finite time
and that at least one robot of the tower considers each possible
direction. This property implies (combined with Rule $1$) 
that $(i)$ the algorithm is able to perpetually explore
any connected-over-time ring without eventual missing edge (even if robots meet); 
and that $(ii)$, when the ring contains an eventual missing edge, one robot is 
eventually located at each extremity of the eventual missing edge and considers 
afterwards the direction of the eventual missing edge. 

Let us consider this last case. We call sentinels the two robots located at extremities 
of the eventual missing edge. The other robots are called explorers. 
By Rule $3$, an explorer that arrives on a node where a sentinel is located
changes its direction. Intuitively, that means that the sentinel signal 
to the explorer that it has reached one extremity of the eventual missing edge and
that it has consequently to turn back to continue the exploration.
Note that, by Rule $2$, the sentinel keeps its direction (and hence its role).
Once an explorer leaves an extremity of the eventual missing edge, we know, 
thanks to Rule $1$ and the main property induced by Rules $2$ and $3$, 
that a robot reaches in a finite time the other extremity of the eventual missing edge
and that (after the second sentinel/explorer meeting) all the nodes have been 
visited by a robot in the meantime. As we can repeat this scheme
infinitely often, our algorithm is able to perpetually explore any connected-over-time 
ring with an eventual missing edge, that ends the informal presentation of our algorithm.


Refer to Algorithm \ref{algo:pef} for the formal statement of our algorithm
called \PEFRRR~(standing for $\mathbb{P}$erpetual $\mathbb{E}$xploration 
in $\mathbb{F}$SYNC with 3 or more robots).
In addition of its $dir$ variable, each robot maintains a boolean variable
$HasMovedPreviousStep$ indicating if the robot has moved during its last 
Look-Compute-Move cycle. This variable is used to implement Rules $2$ and 
$3$.

\begin{algorithm}
\caption{\PEFRRR}\label{algo:pef}
	\begin{algorithmic} [1]
		\If {$HasMovedPreviousStep$ $\wedge$ $ExistsOtherRobotsOnCurrentNode()$}
			\State $dir \leftarrow \overline{dir}$
		\EndIf
		\State $HasMovedPreviousStep \leftarrow ExistsEdge(dir)$
	\end{algorithmic}
\end{algorithm}

\subsection{Proof of Correctness}

In this section, we prove the correctness of \PEFRRR~with $k\geq 3$ robots.
In the following, we consider a connected-over-time ring $\mathcal{G}$ of  
size at least $k+1$. Let $\varepsilon = (G_0,\gamma_0),(G_1,\gamma_1),\ldots$
 be any execution of \PEFRRR~on $\mathcal{G}$.


\begin{lemma} \label{eventualMissingEdgeMeetingHappens}
 If there exists an eventual missing edge in $\mathcal{G}$, then at least one 
 tower is formed in $\varepsilon$.
\end{lemma}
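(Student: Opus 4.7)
My plan is to argue by contradiction: assume no tower is ever formed in $\varepsilon$ and derive a contradiction from the existence of an eventual missing edge together with the hypothesis $k \geq 3$.

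First, if no tower is ever formed, then at every round and for every robot the predicate $ExistsOtherRobotsOnCurrentNode()$ returns false, so by Algorithm~\ref{algo:pef} no robot ever flips its $dir$ variable. Since $dir$ is initialised to $left$ and each robot's chirality is fixed, each robot has, from an external observer's standpoint, a fixed \emph{global} direction (either always clockwise or always counter-clockwise) throughout $\varepsilon$. By the pigeonhole principle applied to the two possible global directions, since $k\geq 3$, at least two distinct robots $r_1, r_2$ travel in the same global direction; up to symmetry I assume this direction is clockwise.

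Let $e=\{u,v\}$ be an eventual missing edge, and let $T$ be a time such that $e \notin E_t$ for every $t\geq T$. Up to relabelling, assume that $v$ is the clockwise neighbour of $u$ via $e$. I would also observe that $e$ is in fact the unique eventual missing edge: otherwise, $U_\mathcal{G}^\omega$ would be a ring with at least two edges removed and thus disconnected, contradicting the connected-over-time assumption. Hence every edge of the ring other than $e$ is recurrent.

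The crux is to show that both $r_1$ and $r_2$ must eventually pile up at node $u$. A clockwise-moving robot located at $u$ at some time $t\geq T$ cannot leave, because its only clockwise incident edge is $e$, which is missing. A clockwise-moving robot located at any other node must cross a non-$e$ edge to advance, and since this edge is recurrent it will be present at some later time, at which point the robot moves. A straightforward induction on the clockwise distance to $u$ shows that such a robot reaches $u$ in finite time and remains there forever. Moreover, two clockwise-moving robots cannot overtake each other without being co-located at some time, i.e.\ without forming a tower, which contradicts the assumption. Hence $r_1$ and $r_2$ keep their initial clockwise order; the leading one (say $r_1$) reaches $u$ first and remains there, and $r_2$ must subsequently also reach $u$, at which moment they share a node and a tower is formed, contradicting our initial assumption.

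The main obstacle I anticipate is to justify cleanly that $r_2$ actually reaches $u$ once $r_1$ is stuck there: one must note that a robot's motion is governed only by the predicate $ExistsEdge(dir)$, not by the occupancy of the destination node, so $r_1$'s presence at $u$ does not prevent $r_2$ from moving into $u$. Granting this, the rest reduces to the inductive progress argument using the recurrence of each non-$e$ edge, plus routine bookkeeping that ties the fixed local value of $dir$ to a fixed global direction seen by the external observer.
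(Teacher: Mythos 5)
Your proof is correct and takes essentially the same route as the paper's: assume no tower, note that then no robot ever changes its global direction, apply the pigeonhole principle to get two robots sharing a global direction, and use the recurrence of every edge except $e$ to drive both to the same extremity of $e$, forming a tower and a contradiction. You simply spell out more of the details (the induction on the clockwise distance to $u$ and the non-overtaking observation) than the paper's terser version.
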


\begin{proof}
 By contradiction, assume that $e$ is an eventual missing edge of $\mathcal{G}$ 
 (such that $e$ is not present in $\mathcal{G}$ after time $t$) and that
 no tower is formed in $\varepsilon$. 

 Executing \PEFRRR, a robot changes the global direction it considers only when
 it forms a tower with another robot. As, by assumption, no tower is formed in
 $\varepsilon$, each robot is always considering the same global direction. All
 the edges of $\mathcal{G}$, except $e$, are infinitely often present in 
 $\mathcal{G}$. Hence, any robot reaches one of the extremity of $e$ in finite
 time after $t$. As the robots consider a direction at each instant time and 
 that there are at least 3 robots, at least 2 robots consider the same global
 direction at each instant time. Hence, at least two robots reach the same
 extremity of $e$. A tower is formed, leading to a contradiction.
\end{proof}

\begin{lemma} \label{noMeetingPerpetualExplorationSolved}
 If $\varepsilon$ does not contain a tower, then every node is infinitely often 
 visited by a robot in $\varepsilon$.
\end{lemma}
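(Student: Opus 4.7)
The plan is to show that under the no-tower assumption each robot keeps a fixed global direction throughout $\varepsilon$, and then to combine this with the (contrapositive) conclusion of Lemma~\ref{eventualMissingEdgeMeetingHappens} to argue that any fixed-direction robot must traverse the entire ring infinitely often.

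First, I would argue that if no tower ever forms in $\varepsilon$, then the predicate $ExistsOtherRobotsOnCurrentNode()$ never returns true for any robot at any time: a positive evaluation at some time $t$ would witness at least two robots sharing a node at time $t$, and by maximality this would extend to a tower in $\varepsilon$. Since the only line of Algorithm~\ref{algo:pef} that modifies $dir$ is guarded by this predicate, each robot's local variable $dir$ is invariant along $\varepsilon$. Combined with the stable chirality of each robot, this means that the global direction considered by each robot (as seen by the external observer) is constant for the whole execution.

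Second, I would invoke Lemma~\ref{eventualMissingEdgeMeetingHappens} in its contrapositive form: since no tower is formed, $\mathcal{G}$ contains no eventual missing edge, so every edge of the underlying ring is recurrent and therefore infinitely often present. Now pick an arbitrary robot $r$ with fixed global direction $d$. A short progress argument finishes the proof: whenever $r$ is located at a node $v$ from some time $t$ on, the adjacent edge of $v$ in direction $d$ is recurrent, hence present at some time $t' \geq t$; at round $t'$, $r$ executes the Move and steps to the next node of the ring in direction $d$ (and it cannot have moved earlier since the only edge it tries to traverse is the one in direction $d$). Iterating this, $r$ performs infinitely many unidirectional moves along the ring, so after every $n$ moves it returns to its starting node; in particular $r$, and thus at least one robot, visits every node of $\mathcal{G}$ infinitely often, which is exactly what the perpetual exploration specification requires.

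I do not expect a serious obstacle here: the statement is essentially the trivial case of the correctness proof, and the only subtlety is to be careful that the no-tower hypothesis globally forbids \emph{any} two robots from ever sharing a node, so that the direction-flipping branch of the algorithm is never executed by \emph{any} robot (not only by those in some prescribed subset). Once this is observed, the rest reduces to the recurrence of every edge, which is precisely what the contrapositive of Lemma~\ref{eventualMissingEdgeMeetingHappens} provides.
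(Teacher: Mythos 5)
Your proposal is correct and follows essentially the same route as the paper's proof: use Lemma~\ref{eventualMissingEdgeMeetingHappens} in contrapositive to conclude every edge is recurrent, observe that without towers no robot ever changes its global direction, and conclude that each robot circulates around the ring indefinitely. Your write-up merely spells out the two steps (the invariance of $dir$ via the guard on the direction-flipping line, and the edge-by-edge progress argument) that the paper leaves implicit.
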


\begin{proof}
 Assume that there is no tower formed in $\varepsilon$. 
 By Lemma~\ref{eventualMissingEdgeMeetingHappens}, if there is an eventual 
 missing edge in $\mathcal{G}$, then there is at least one tower formed. In 
 consequence, all the edges of $\mathcal{G}$ are infinitely often present in 
 $\mathcal{G}$. 
 
 Executing \PEFRRR, a robot changes the global direction it considers only when
 it forms a tower with another robot. Hence, none of the robots change the
 global direction it considers in $\varepsilon$. Since all the edges are
 infinitely often present, each robot moves infinitely often in the same global
 direction, that implies the result.
\end{proof}

\begin{lemma} \label{oppositeDirectionAfterTower}
 If a tower $T$ of 2 robots is formed in $\varepsilon$, then these two robots
 consider two opposite global directions while $T$ exists.
\end{lemma}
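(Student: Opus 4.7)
The plan is to perform a case analysis on how the 2-robot tower $T = (\{A,B\},[t_s,t_e])$ is formed. Let $u$ be the node at which $A$ and $B$ are colocated throughout $[t_s,t_e]$. By maximality of $T$, the two robots occupy distinct nodes in $\gamma_{t_s-1}$; since they both reach $u$ in $\gamma_{t_s}$ and $u$ has exactly two neighbors in the ring, only two formation scenarios arise, depending on how many of $A$, $B$ actually moved during round $t_s-1$.

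In the first scenario, both $A$ and $B$ move in round $t_s-1$. They must then have occupied the two distinct neighbors of $u$ and each traversed its incident edge into $u$; these two edges correspond to opposite global directions. By stable chirality, the $dir$ value each robot carries in $\gamma_{t_s}$ encodes the global direction it just used to reach $u$, so in $\gamma_{t_s}$ the two $dir$ values point to opposite global directions. A routine induction on rounds $t_s, t_s+1, \ldots, t_e$ then shows that this opposition is preserved: a robot with $HasMovedPreviousStep$ false never flips (the algorithm's condition fails); if exactly one of the two has $HasMovedPreviousStep$ true, Rule~3 flips only that one; and case analysis on the Move phase rules out the remaining sub-cases where a flip could break opposition.

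In the second scenario, exactly one of $A$, $B$ moves during round $t_s-1$. WLOG $A$ stays at $u$ (because the edge in $A$'s direction was missing in $G_{t_s-1}$) while $B$ moves to $u$ from some neighbor $v$ via the present edge $\{u,v\}$. Since $A$ did not cross $\{u,v\}$, its direction necessarily points toward the other neighbor $w$ of $u$; by stable chirality, $B$'s direction (which was not flipped by the Compute of round $t_s-1$, as $B$ was isolated at $v$) continues at $u$ to point toward $w$ as well. Hence $A$ and $B$ initially consider the same global direction in $\gamma_{t_s}$. The Compute of round $t_s$ then repairs this: $A$'s $HasMovedPreviousStep$ is false, so by Rule~2 its direction is preserved, whereas $B$'s is true and it detects the other robot, so by Rule~3 its direction is flipped. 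After this Compute the two robots consider opposite global directions. For the tower to persist past $t_s$, the Move phase of round $t_s$ must keep both robots at $u$, which (with opposite directions) requires both edges incident to $u$ to be missing in $G_{t_s}$; an induction on the remaining rounds $[t_s+1,t_e]$ (where $HasMovedPreviousStep$ is false for both robots, so no further flips occur) completes the argument.

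The main obstacle is the instant $t_s$ itself in the second scenario: strictly, $dir_A$ and $dir_B$ agree in $\gamma_{t_s}$ and become opposite only after round $t_s$'s Compute. The natural interpretation, and the one consistent with how the algorithm drives the Move phase, is that ``the direction considered while $T$ exists'' refers to the effective post-Compute direction at each time in $[t_s,t_e]$; under this reading, Rule~3's flip during round $t_s$ is part of the tower's processing and the claim holds throughout $T$'s lifetime. A secondary subtlety arises in the first scenario if either robot was involved in a different tower at time $t_s-1$: its direction may have been flipped during that Compute, but stable chirality still guarantees that the post-Compute direction matches precisely the edge actually traversed to reach $u$, and so the geometric opposition of the two entering edges immediately transfers to opposition of the $dir$ variables in $\gamma_{t_s}$.
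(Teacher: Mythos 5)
Your proof is correct and follows essentially the same route as the paper's: a two-case analysis of the formation round (your ``both moved'' vs.\ ``exactly one moved'' split coincides with the paper's ``opposite'' vs.\ ``same'' pre-formation directions), showing the Compute phase at formation time leaves the two $dir$ variables globally opposed, and then observing that no further flip can occur while the tower persists since a flip requires a move and a move would break the tower. Your explicit handling of the timing at $t_s$ (opposition holding only after that round's Compute in the one-mover case) matches the paper's own conclusion, which likewise asserts opposition only ``after the Compute phase of time $t$.''
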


\begin{proof}
 Assume that 2 robots form a tower at a time $t$ in $\varepsilon$. Let us 
 consider the 2 following cases:
 
  \noindent\textbf{Case 1:} The two robots consider the same global direction 
  during the Move phase of time $t - 1$. 

  \noindent In this case, one robot (denoted $r$) does not move during the
  Move phase of time $t$, while the other (denoted $r'$) moves and joins the 
  first one on its current node. During the Compute phase of time $t$, $r$ still
  considers the same global direction, while $r'$ changes the global direction
  it considers by construction of \PEFRRR. Then, the two robots consider two 
  different global directions after the Compute phase of time $t$.
  
  \noindent\textbf{Case 2:} The two robots consider two opposite global directions 
  during the Move phase of time $t - 1$. 

  \noindent In this case, the two robots move at
  time $t - 1$. During the Compute phase of time $t$, the two robots change the
  global direction they consider by construction of \PEFRRR. Hence they consider 
  two different global directions after the Compute phase of time $t$.

 A robot executing \PEFRRR~changes its global direction only if it
 has moved during the previous step. So, the robots of the tower do not change
 the global direction they consider as long as they are involved in the tower. As 
 the two robots consider two different global directions after the Compute phase
 of time $t$, we obtain the lemma.
\end{proof}

\begin{lemma} \label{NoMoreThan2RobotsMeeting}
 No tower of $\varepsilon$ involves 3 robots or more.
\end{lemma}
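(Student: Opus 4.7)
The plan is to argue by contradiction. Suppose some tower of $\varepsilon$ involves at least three robots, and let $t_0 \geq 1$ be the earliest time at which at least three robots share a common node, call it $v$. By the choice of $t_0$, every node at time $t_0 - 1$ hosts at most two robots. Since in one step a robot either stays put or crosses a single edge, every one of the three robots at $v$ at time $t_0$ was located at $v$ or at one of its two ring-neighbors $u$, $w$ at time $t_0 - 1$. I would then enumerate the ways the three robots can be distributed among $\{u, v, w\}$ subject to the ``at most two per node'' constraint and derive a contradiction in each case.

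The case analysis rests on two simple facts that compose very nicely with Lemma~\ref{oppositeDirectionAfterTower}. First, if two of the robots sit on the same node $x \in \{u, v, w\}$ at time $t_0 - 1$, they are already involved in a $2$-robot tower at that moment, so by Lemma~\ref{oppositeDirectionAfterTower} they consider opposite global directions; hence at most one of them can traverse any given edge incident to $x$ during the Move phase at time $t_0 - 1$. Second, if a robot is alone at $v$ at time $t_0 - 1$ but both edges incident to $v$ are present at that time (which is automatic as soon as a robot arrives from each side), then that robot is forced to move away from $v$ whichever direction it points to, because $v$ has degree exactly two in the ring.

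Armed with these two facts, the distributions to rule out are: three robots already at $v$ (immediate contradiction of minimality); two at $v$ and one on a neighbor (the two at $v$ have opposite directions, and the edge used by the arriving robot forces the $v$-robot pointing that way to leave); one at $v$ and two at the same neighbor (opposite directions at that neighbor prevent both from reaching $v$); one at $v$ and one at each neighbor (both edges incident to $v$ are present, so the robot at $v$ is forced to leave); zero at $v$ with three spread over $\{u,w\}$ (pigeonhole puts two at the same neighbor, again preventing both from reaching $v$). In every case, strictly fewer than three robots occupy $v$ at time $t_0$, contradicting the choice of $t_0$.

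The main obstacle is really the bookkeeping: Lemma~\ref{oppositeDirectionAfterTower} must be invoked at exactly the right moment, namely while two robots are cohabiting at time $t_0 - 1$, one round before the hypothetical $3$-robot tower appears, and one has to make sure the enumeration is exhaustive. Once the degree-two structure of the ring is used to forbid a lone robot from staying at $v$ while both incident edges are present, each sub-case reduces to the same obstruction, so no genuinely new idea is needed beyond combining Lemma~\ref{oppositeDirectionAfterTower} with the geometry of the ring.
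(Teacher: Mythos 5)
Your proof is correct and follows essentially the same route as the paper's: the paper phrases it as an induction on $t$ (no $3$-robot tower at time $t$ implies none at $t+1$) whereas you take a minimal counterexample time $t_0$, but both arguments rest on exactly the same two ingredients, namely Lemma~\ref{oppositeDirectionAfterTower} (co-located pairs point in opposite global directions, so at most one of them crosses any given incident edge) and the degree-two structure of the ring forcing a lone robot to leave a node whose both incident edges are present. If anything, your enumeration of the possible distributions of the three robots over $\{u,v,w\}$ at time $t_0-1$ is slightly more exhaustive than the paper's case split, which only analyzes the node already hosting a $2$-robot tower.
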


\begin{proof}
 We prove this lemma by recurrence. As there is no tower in $\gamma_{0}$ by 
 assumption, it remains to prove that, if $\gamma_{t}$ contains no tower
 with 3 or more robots, so is $\gamma_{t + 1}$. Let us study the following 
 cases:
 
  \noindent\textbf{Case 1:} $\gamma_{t}$ contains no tower.
 
  \noindent The robots can cross at most one edge at each step. Each node has at most 2 
  adjacent edges in $G_{t}$, hence the maximum number of robots involved in a 
  tower of $\gamma_{t + 1}$ is 3. If a tower involving 3 robots is formed in 
  $\gamma_{t + 1}$, one robot $r$ has not moved during the Move phase of time 
  $t$, while the two other robots (located on the two adjacent nodes of its 
  location) have moved to its position. That implies that the two adjacent edges
  of the node where $r$ is located are present in $G_{t}$. As any robot 
  considers a global direction at each instant time, $r$ necessarily moves in
  step $t$, that is contradictory. Therefore, only towers of 2 robots can be 
  formed in $\gamma_{t + 1}$.
  
  \noindent\textbf{Case 2:} $\gamma_{t}$ contains towers of at most 2
  robots.

  \noindent Let $T$ be a tower involving 2 robots in $\gamma_{t}$ and $u$ be the node
  where $T$ is located in $\gamma_{t}$. By 
  Lemma~\ref{oppositeDirectionAfterTower}, the 2 robots of $T$ consider two 
  opposite global directions in $\gamma_{t}$. 
  
  \noindent Consider the 3 following sub-cases:
  
  \noindent $(i)$ If there is no adjacent edge to $u$ in $G_{t}$, then no other robot can
  increase the number of the robots involved in the tower. 
  
  \noindent $(ii)$ If there is only one adjacent edge to $u$ in $G_{t}$, then only one 
  robot may traverse this edge to increase the number of robots involved in $T$.
  Indeed, if there are multiple robots on an adjacent node to $u$, then these 
  robots are involved in a tower $T'$ of 2 robots (by assumption on 
  $\gamma_{t}$) and they are considering two opposite global directions in 
  $\gamma_{t}$. However, as an adjacent edge to $u$ is present in 
  $G_{t}$ and as the robots of $T$ are considering two opposite global 
  directions, then one robot of $T$ leaves $T$ at time $t$. In other words, even
  if a robot of $T'$ moves on $u$, one robot of $T$ leaves $u$. Then, there is 
  at most 2 robots on $u$ in $\gamma_{t + 1}$. 
  
  \noindent $(iii)$ If there are two adjacent edges to $u$ in $\gamma_{t}$, then, using 
  similar arguments as above, we can prove that only one robot crosses each
  of the adjacent edges of $u$. Moreover, the robots of $T$ move in opposite 
  global directions and leave $u$, implying that at most 2 robots are 
  present on $u$ in $\gamma_{t + 1}$.
\end{proof}

\begin{lemma} \label{NormalCasePerpetualExplorationSolved}
 If $\mathcal{G}$ has no eventual missing edge and $\varepsilon$ contains towers
 then every node is infinitely often visited by a robot in $\varepsilon$.
\end{lemma}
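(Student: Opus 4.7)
The plan is to proceed by contradiction, assuming that some node of $\mathcal{G}$ is not visited infinitely often. Let $V_\infty \subseteq V$ denote the set of nodes visited infinitely often by a robot. Pigeonhole on the trajectory of any fixed robot (an infinite sequence over the finite set $V$) gives $V_\infty \neq \emptyset$. Suppose for contradiction that $V_\infty \subsetneq V$. Because the underlying graph is a ring, there is an edge $\{u,v\}$ with $u \in V_\infty$ and $v \notin V_\infty$; let $u'$ denote the other neighbor of $u$ in the ring. By definition of $V_\infty$, there is a time $t^*$ such that for every $t \ge t^*$ no robot is located in $V \setminus V_\infty$; in particular no robot ever reaches $v$ at a round $\ge t^*$. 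I will derive a contradiction by forcing some robot to cross $\{u,v\}$ at a round $\ge t^*$.

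The central claim is the following: after $t^*$, whenever a robot $r$ is located at $u$ with its variable $dir$ pointing globally toward $v$, $r$ reaches $v$ in finite time. Inspecting Algorithm~\ref{algo:pef}, $r$ can flip $dir$ only when its $HasMovedPreviousStep$ flag is true and another robot shares its current node, and the flag is true only if $r$ actually moved on the previous round. If $r$ is alone at $u$, its $dir$ is never changed, so $r$ crosses $\{u,v\}$ the first time this edge is present (recurrence of $\{u,v\}$ holds because $\mathcal{G}$ has no eventual missing edge). If $r$ is instead involved in a tower at $u$, then by Lemmas~\ref{oppositeDirectionAfterTower} and~\ref{NoMoreThan2RobotsMeeting} the tower consists of exactly two robots pointing in opposite global directions throughout its lifetime, so one of them points toward $v$; while the tower persists no participant moves, hence its $HasMovedPreviousStep$ flag stays false and the two directions are frozen. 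Since both edges incident to $u$ are recurrent, the tower must end in finite time: either the toward-$v$ robot crosses $\{u,v\}$ immediately, or the other robot leaves first, reducing to the alone case on the toward-$v$ robot.

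It then suffices to exhibit a robot at $u$ pointing toward $v$ at some round $\ge t^*$. Since $u \in V_\infty$ and every robot located at $u$ eventually moves away (recurrence of the incident edges), infinitely many arrivals at $u$ occur at rounds $\ge t^*$. Each arrival is a crossing of an incident edge of $u$; arrivals from $v$ are ruled out after $t^*$, so all but finitely many arrivals at $u$ are crossings of $\{u',u\}$ from $u'$ to $u$. Necessarily $u' \in V_\infty$ too, for otherwise $u$ would be isolated inside $V_\infty$ and the same recurrence argument on the two edges of $u$ would force a robot at $u$ to move to a forbidden neighbor anyway. A robot crossing $\{u',u\}$ from $u'$ to $u$ preserves its $dir$ through the move, so upon arrival at $u$ it points along the global direction from $u'$ through $u$, which continued from $u$ points exactly toward $v$. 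Applying the central claim to this arriving robot produces a visit to $v$ at a round $\ge t^*$, contradicting the choice of $t^*$.

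The main obstacle is the tower case inside the central claim: one must track $HasMovedPreviousStep$ carefully over the lifetime of a tower to see that, once the two robots are colocated and neither moves, the direction-flip rule never fires again, so the toward-$v$ direction really does survive until $\{u,v\}$ reappears. Combined with Lemma~\ref{oppositeDirectionAfterTower}, which guarantees that such a toward-$v$ direction exists throughout the tower, this locks in the conclusion, and recurrence of $\{u,v\}$ discharges the contradiction.
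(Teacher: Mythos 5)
Your proof is correct, and its technical heart is the same as the paper's: a robot at $u$ pointing toward $v$ either keeps that direction until the recurrent edge $\{u,v\}$ reappears (a stationary robot never flips, since $HasMovedPreviousStep$ is false), or it flips upon forming a tower, in which case Lemmas~\ref{NoMoreThan2RobotsMeeting} and~\ref{oppositeDirectionAfterTower} hand the toward-$v$ direction to its tower partner. The paper packages this as a one-step progress property (``a robot at $u$ considering $gd$ yields, in finite time, a robot at the next node in direction $gd$ still considering $gd$'') and then concludes by applying it recurrently around the ring. You instead wrap the same mechanism in a contradiction argument: you isolate the set $V_\infty$ of infinitely-visited nodes, pick a boundary edge $\{u,v\}$, and show that the infinitely many arrivals at $u$ from $u'$ each deliver a robot pointing toward $v$, to which the progress mechanism applies. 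Your version is somewhat longer but makes explicit the step the paper compresses into ``by applying recurrently this property to any robot,'' namely why a robot considering the right direction at the right node recurs infinitely often. One small imprecision: your central claim asserts that \emph{the} robot $r$ reaches $v$, whereas your own case analysis (correctly) only delivers that \emph{some} robot reaches $v$ after a possible tower handoff; since a single visit to $v$ after $t^*$ suffices for the contradiction, this does not affect the argument.
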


\begin{proof}
 Assume that $\mathcal{G}$ has no eventual missing edge and $\varepsilon$ 
 contains towers.
 
 
 We want to prove the following property. If during the Look phase
 of time $t$, a robot $r$ is located on a node $u$ considering the global direction $gd$, 
 then there exists a time $t' \geq t$ such that, during the Look phase of time 
 $t'$, a robot is located on the node $v$ adjacent to $u$ in the global direction $gd$ 
 and considers the global direction $gd$.
 
 Let $t" \geq t$ be the smallest time after time $t$ where the 
 adjacent edge of $u$ in the global direction $gd$ is present in $\mathcal{G}$.
 As all the edges of $\mathcal{G}$ are infinitely often present, $t"$ exists.
   
  \noindent $(i)$ If $r$ crosses the adjacent edge of $u$ in the global direction $gd$ 
  during the Move phase of time $t"$, then the property is verified.
  
  \noindent $(ii)$ If $r$ does not cross the adjacent edge of $u$ in the global direction
  $gd$, this implies that $r$ changes the global direction it considers during the
  Look phase of time $t$. While executing \PEFRRR, a robot changes its global 
  direction when it forms a tower with another robot. Therefore, at 
  time $t$, $r$ forms a tower with a robot $r'$. By 
  Lemmas~\ref{NoMoreThan2RobotsMeeting} and \ref{oppositeDirectionAfterTower}, 
  two robots involved in a tower consider two opposite global directions. Hence,
  after the Compute phase of time $t$, $r'$ considers the global direction $gd$.
  A robot executing \PEFRRR~does not change the global direction it considers
  until it moves. So, $r'$ considers the global direction $gd$ during 
  the Move phase of time $t"$. Hence, during the Look phase of time $t" + 1$, 
  $r'$ is on node $v$ and considers the global direction $gd$.
 
 By applying recurrently this property to any robot, we
 prove that all the nodes are infinitely often visited.
\end{proof}

\begin{lemma} \label{ReachabilityOfAdjacentNodesOfAMissingEdge}
 If $\mathcal{G}$ has an eventual missing edge $e$ (such that $e$ is missing 
 forever after time $t$) and, during the Look phase of a time $t' \geq t$, a
 robot considers a global direction $gd$ and is located on a node at a distance
 $d \neq 0$ in $U_\mathcal{G}^\omega$ from the extremity of $e$ in the global
 direction $gd$, then it exists a time $t" \geq t'$ such that, during the Look 
 phase of time $t"$, a robot is on a node at distance $d - 1$ in
 $U_\mathcal{G}^\omega$ from the extremity of $e$ in the global direction $gd$
 and considers the global direction $gd$.
\end{lemma}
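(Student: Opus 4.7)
The plan is to mirror the proof of Lemma~\ref{NormalCasePerpetualExplorationSolved}, replacing its use of ``all edges are infinitely often present'' by recurrence in $U_\mathcal{G}^\omega$. Let $u$ be the node where $r$ sits at the Look phase of $t'$, let $w$ denote the extremity of $e$ in direction $gd$ from $u$, and let $v$ be the neighbour of $u$ in direction $gd$. Because $d(u,w) = d \neq 0$, we have $u \neq w$, so $\{u,v\} \neq e$; moreover $\{u,v\}$ is the first edge of the unique path from $u$ to $w$ in the chain $U_\mathcal{G}^\omega$, hence it is a recurrent edge of $\mathcal{G}$. Consequently the smallest time $t^* \geq t'$ at which $\{u,v\}$ is present in $\mathcal{G}$ is well defined.

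I would then split on whether $r$ crosses $\{u,v\}$ during the Move phase of $t^*$. If it does, then at the Look phase of $t^* + 1$ the robot $r$ is located at $v$ with direction $gd$ and $d(v,w) = d - 1$, which gives the claim. If it does not, then $r$'s direction must have changed between the Look phase of $t'$ and the Move phase of $t^*$, and by the algorithm this can happen only while $r$ is in a tower. By Lemmas~\ref{NoMoreThan2RobotsMeeting} and~\ref{oppositeDirectionAfterTower} this tower involves exactly one other robot $r'$, which exits the Compute phase considering direction $gd$ (since $r$ now points to $\overline{gd}$). Because a robot executing \PEFRRR{} never changes its direction unless it has just moved, $r'$ keeps direction $gd$ until it next moves, and recurrence of the relevant $gd$-edge guarantees that it eventually does so, landing on a node at distance $d - 1$ from $w$.

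The main delicate point is to justify, in the second case, that the tower's node is precisely $u$, so that $r'$'s first move in direction $gd$ brings it exactly to $v$. This hinges on the semantics of $HasMovedPreviousStep$: as long as $r$ keeps direction $gd$ and the edge $\{u,v\}$ is absent, $r$ cannot move and its flag stays false, so $r$ cannot flip at any time strictly between $t'$ and $t^*$. Hence the flip must take place at $\sigma = t'$ itself, forcing $r$ to have just arrived at $u$ in direction $gd$ and $r'$ to be already at $u$. From there, the standard waiting argument based on recurrence of $\{u,v\}$, combined with the fact that $r'$'s $HasMovedPreviousStep$ flag also stays false while it is blocked at $u$ so that no subsequent tower can cause it to flip, yields at time $t^* + 1$ a robot ($r'$) on $v$ considering $gd$, completing the proof.
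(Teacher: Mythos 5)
Your proposal is correct and follows essentially the same route as the paper's own proof: define the first time $t^*$ (the paper's $t''$) at which the recurrent $gd$-edge of $u$ reappears, and split on whether $r$ crosses it, handing off to the opposite-direction tower partner $r'$ via Lemmas~\ref{NoMoreThan2RobotsMeeting} and~\ref{oppositeDirectionAfterTower} otherwise. Your ``delicate point'' (that the blocked flag forces the flip to occur at $t'$ itself, so the tower sits at $u$ and $r'$ stays there pointing to $gd$ until $t^*$) is handled only implicitly in the paper, so your version is if anything slightly more careful.
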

 
\begin{proof}
 Assume that $\mathcal{G}$ has an eventual missing edge $e$ (such that $e$ is 
 missing forever after time $t$) and that, during the Look phase of time 
 $t' \geq t$, a robot $r$ considers a global
 direction $gd$ and is located on a node $u$ at distance $d \neq 0$ in 
 $U_\mathcal{G}^\omega$ from the extremity of $e$ in the global direction $gd$. 
 
 Let $v$ be the adjacent node of $u$ in the global direction $gd$.

 Let $t" \geq t'$ be the smallest time after time $t'$ where the 
 adjacent edge of $u$ in the global direction $gd$ is present in $\mathcal{G}$.
 As all the edges of $\mathcal{G}$ except $e$ are infinitely often present and
 as $u$ is at a distance $d \neq 0$ in $U_\mathcal{G}^\omega$ from the extremity
 of $e$ in the global direction $gd$, then the adjacent edge of $u$ in the 
 global direction $gd$ is infinitely often present in $\mathcal{G}$. Hence, $t"$ 
 exists.
   
  \noindent $(i)$ If $r$ crosses the adjacent edge of $u$ in the global direction $gd$ 
  during the Move phase of time $t"$, then the property is verified.
  
  \noindent $(ii)$ If $r$ does not cross the adjacent edge of $u$ in the global direction
  $gd$, this implies that $r$ changes the global direction it considers during the
  Look phase of time $t$. While executing \PEFRRR~a robot changes the global 
  direction it considers when it forms a tower with another robot. Therefore, at 
  time $t$, $r$ forms a tower with a robot $r'$. By 
  Lemmas~\ref{NoMoreThan2RobotsMeeting} and \ref{oppositeDirectionAfterTower}, 
  two robots involved in a tower consider two opposite global directions. Hence,
  after the Compute phase of time $t$, $r'$ considers the global direction $gd$.
  A robot executing \PEFRRR~does not change the global direction it considers
  until it moves. Therefore, $r'$ considers the global direction $gd$ during 
  the Move phase of time $t"$. Hence, during the Look phase of time $t" + 1$, 
  $r'$ is on node $v$ and considers the global direction $gd$.
\end{proof}
 
\begin{lemma} \label{2RobotsForeverOnAdjacentNodesOfAMissingEdge}
 If $\mathcal{G}$ has an eventual missing edge $e$, then eventually one robot
 is forever located on each extremity of $e$ pointing to $e$.
\end{lemma}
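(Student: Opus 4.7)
The plan is to identify an absorbing ``sentinel configuration'' at each extremity of $e$, and then to use Lemma~\ref{ReachabilityOfAdjacentNodesOfAMissingEdge} iteratively to show that this configuration is eventually reached at both extremities.

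Let $t_0$ be a time after which $e$ is never present in $\mathcal{G}$, and let $u_1$ and $u_2$ denote the two extremities of $e$. I first establish an absorbing property: if at some time $\tau \geq t_0$ a robot $r$ is at $u_i$ with its $dir$ pointing to $e$ and with $HasMovedPreviousStep$ equal to false at the start of the round, then $r$ remains at $u_i$ pointing to $e$ forever. Indeed, by Algorithm~\ref{algo:pef} the flip condition fails (it requires $HasMovedPreviousStep$ to be true), the subsequent update sets $HasMovedPreviousStep \leftarrow ExistsEdge(dir)$, which returns false since $e$ is absent after $t_0$, and the Move phase is a no-op; a straightforward induction closes the claim. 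I call such an $r$ a \emph{sentinel} at $u_i$.

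Next I show that a sentinel is eventually installed at one of the extremities, call it $u_i$. At time $t_0$ every robot has some global direction, hence at least one robot points toward $u_1$ through $U_\mathcal{G}^\omega$ or at least one points toward $u_2$; pick $u_i$ accordingly. Iterating Lemma~\ref{ReachabilityOfAdjacentNodesOfAMissingEdge} from that robot yields a first time $\tau_i \geq t_0$ at which some robot $r$ is at $u_i$ pointing to $e$. A short case analysis then handles $\tau_i$. If $r$'s $HasMovedPreviousStep$ is false, the sentinel condition is already met. If it is true and $r$ is alone at $u_i$, then during round $\tau_i$'s Compute phase the flip condition fails, $HasMovedPreviousStep$ becomes false, and $r$ becomes a sentinel one round later. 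The delicate sub-case is when another robot $r'$ is also at $u_i$ at the start of $\tau_i$: Lemmas~\ref{NoMoreThan2RobotsMeeting} and~\ref{oppositeDirectionAfterTower}, combined with the fact that (after $t_0$) at most one robot per round may enter $u_i$ from the chain, force $r'$ to have pointed away from $e$ at round $\tau_i - 1$ and to have flipped during that round's Compute phase; this flip sets $r'$'s $HasMovedPreviousStep$ to false (since $e$ is absent), so $r'$ is already a sentinel at the start of $\tau_i$.

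Finally, I install a sentinel at the other extremity $u_j$ (where $\{i, j\} = \{1, 2\}$) using the one at $u_i$. The remaining $k - 1 \geq 2$ non-sentinel robots must eventually produce one pointing toward $u_j$: either one already does, or all of them point toward $u_i$, in which case each eventually reaches $u_i$ (the chain edges being infinitely often present), forms a tower with the $u_i$-sentinel and, since the newcomer has $HasMovedPreviousStep$ true while the sentinel has it false, only the newcomer flips, now pointing toward $u_j$. A second iterated application of Lemma~\ref{ReachabilityOfAdjacentNodesOfAMissingEdge} toward $u_j$ followed by the same case analysis as above yields the second sentinel. The main technical obstacle is precisely this case analysis at each $\tau_i$: ruling out configurations in which an arriving robot would perpetually be flipped away relies on Lemmas~\ref{oppositeDirectionAfterTower} and~\ref{NoMoreThan2RobotsMeeting} and on the post-$t_0$ structural constraint that only one robot may enter an extremity of $e$ per round.
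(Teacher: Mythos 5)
Your proof is correct and follows essentially the same route as the paper's: iterate Lemma~\ref{ReachabilityOfAdjacentNodesOfAMissingEdge} to bring a robot to one extremity of $e$, observe that a robot pointing to $e$ that did not just move is locked there forever (the paper's Case 1, your absorbing sentinel state), resolve the co-located case via Lemmas~\ref{NoMoreThan2RobotsMeeting} and~\ref{oppositeDirectionAfterTower}, and obtain the second sentinel through the same dichotomy the paper calls Cases A and B (either some robot already heads toward the other extremity, or one bounces off the first sentinel and flips). The only cosmetic difference is that you make the absorbing condition explicit via $HasMovedPreviousStep$ being false, where the paper argues directly that a robot which cannot move never changes the global direction it considers.
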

 
\begin{proof}
 Assume that $\mathcal{G}$ has an eventual missing edge $e$ such that $e$ is 
 missing forever after time $t$.
 
 First, we want to prove that a robot reaches one of the extremities of $e$ 
 in a finite time after $t$ and points to $e$ at this time. If it is not 
 the case at time $t$, then there exists at this time a robot
 considering a global direction $gd$ and located on a node $u$ at distance
 $d \neq 0$ in $U_\mathcal{G}^\omega$ from the extremity of $e$ in the global
 direction $gd$. By applying $d$ times
 Lemma~\ref{ReachabilityOfAdjacentNodesOfAMissingEdge}, we 
 prove that, during the Look phase of a time $t'\geq t$, a robot (denote it $r$) 
 reaches the extremity of $e$ in the global
 direction $gd$ from $u$ (denote it $v$ and let $v'$ be the other extremity of $e$), 
 and that this robot considers the global direction
 $gd$. Let us consider the following cases:
 
  \noindent\textbf{Case 1:} $r$ is isolated on $v$ at time $t'$.
  
  \noindent In this case, by construction of \PEFRRR, $r$ does not change,
  during the Compute phase of time $t'$, the global direction that it considers
  during the Move phase of time $t' - 1$. Moreover, a robot can change the
  global direction it considers only if it moves during the previous step.
  All the edges of $\mathcal{G}$ except $e$ are infinitely often
  present. As, at time $t'$, $r$ points to $e$, it cannot move. Therefore, from
  time $t'$, $r$ does not move and does not change the global direction it
  considers. Then, $r$ remains located on $v$ forever after $t'$ considering $gd$.

  \noindent\textbf{Case 2:} $r$ is not isolated on $v$ at time $t'$.
  
  \noindent By Lemmas~\ref{NoMoreThan2RobotsMeeting}, $r$ forms a tower 
  with only one another robot $r'$. By Lemmas~\ref{NoMoreThan2RobotsMeeting} and 
  \ref{oppositeDirectionAfterTower}, two robots that form a tower consider two 
  opposite global directions. Hence, either $r$ or $r'$ 
  considers the global direction $gd$ while the other one consider the global 
  direction $\overline{gd}$. As all the edges of $\mathcal{G}$ except $e$ are 
  infinitely often present, then in finite time either $r$ or $r'$ leaves $v$. 
  We can now apply the same arguments than in Case 1 to the robot that stays 
  on $v$ to prove that this robot remains located on $v$ forever after $t'$ 
  considering $gd$.

 In both cases, a robot remains forever on $v$ considering $gd$ after $t'$. 
 Assume without loss of generality that it is $r$.
 Let us consider the two following cases:

  \noindent\textbf{Case A:} It exists $r'\neq r$ considering $\overline{gd}$ at time $t'$.

  \noindent We can apply recurrently Lemma~\ref{ReachabilityOfAdjacentNodesOfAMissingEdge}, 
  and the arguments above to prove that a robot is eventually forever located on $v'$
  considering $\overline{gd}$.

  \noindent\textbf{Case B:} All robots $r'\neq r$ considers $gd$ at time $t$.

  \noindent We can apply recurrently Lemma~\ref{ReachabilityOfAdjacentNodesOfAMissingEdge}
  to prove that, in finite time, a robot forms a tower with $r$ on $v$. Then,
  by construction of \PEFRRR, this robot consider $\overline{gd}$ after the Compute
  phase of this time (and hence during the Look phase of the next time).
  We then come back to Case A.

 In both cases, the lemma holds.
\end{proof}

\begin{lemma} \label{OneEdgeMissingForeverPerpetualExplorationSolved}
 If $\mathcal{G}$ has an eventual missing edge and $\varepsilon$ contains
 towers, then every node is infinitely often visited.
\end{lemma}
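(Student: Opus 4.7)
The plan is to combine Lemma~\ref{2RobotsForeverOnAdjacentNodesOfAMissingEdge} (sentinels get installed forever at the two extremities of the eventual missing edge) with Lemma~\ref{ReachabilityOfAdjacentNodesOfAMissingEdge} (the ``considering $gd$'' role propagates, in finite time, toward the extremity in direction $gd$) to show that, after a finite prefix, an explorer robot keeps bouncing back and forth between the two sentinels, sweeping every intermediate chain node on each trip.

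First I will invoke Lemma~\ref{2RobotsForeverOnAdjacentNodesOfAMissingEdge} on the eventual missing edge $e$ with extremities $v,v'$, obtaining a time $t_s$ from which two sentinels $s$ at $v$ and $s'$ at $v'$ remain forever, both pointing to $e$. Denote by $gd$ the global direction from $v$ through $e$, so $s$ considers $gd$ and $s'$ considers $\overline{gd}$. Since $k\geq 3$, at least one explorer (a robot distinct from $s$ and $s'$) exists; the nodes $v$ and $v'$ are trivially visited at every time by the sentinels, so I only need to deal with the $L\geq 1$ chain nodes strictly between $v$ and $v'$ (the chain not using $e$).

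The core argument shows that, starting from any time $t\geq t_s$, every chain node is visited within finite time, and that this pattern repeats forever. Fix an explorer $r$ at time $t$; without loss of generality $r$ considers $gd$, so $r$ sits at some distance $d\geq 1$ from $v$ in $U_\mathcal{G}^\omega$ in direction $gd$. Iterating Lemma~\ref{ReachabilityOfAdjacentNodesOfAMissingEdge} $d$ times yields a chronological sequence of visits at chain nodes at distances $d-1,d-2,\ldots,0$ from $v$, eventually producing a robot $r^{*}$ that reaches $v$ and forms a tower with $s$. By Rule~3 of \PEFRRR~(applied to $r^{*}$, which just moved) the direction of $r^{*}$ flips to $\overline{gd}$; $s$ cannot move (edge $e$ is missing) and so keeps direction $gd$. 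Since the chain edge incident to $v$ is infinitely often present, $r^{*}$ eventually steps off $v$ in direction $\overline{gd}$, landing at distance $L$ from $v'$ in that direction. Applying Lemma~\ref{ReachabilityOfAdjacentNodesOfAMissingEdge} $L$ further times produces visits at every remaining chain node and finally at $v'$, where the arriving robot forms a tower with $s'$, is flipped by Rule~3, and begins a symmetric return trip toward $v$. Iterating this ping-pong pattern forever delivers the required infinitely-often visits of every node.

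The delicate step, which I expect to be the main obstacle, is the iterative application of Lemma~\ref{ReachabilityOfAdjacentNodesOfAMissingEdge}: its conclusion only asserts that ``a'' robot reaches the next distance, and case~(ii) of its proof may transfer the $gd$-carrying role from $r$ to another robot $r'$ via a tower interaction. I will argue that this identity switching is harmless, since each application still produces a genuine move of some robot onto the next-distance node, which is precisely the visit we need; chaining such visits only requires that the direction $gd$ (resp. $\overline{gd}$) continue to be carried by some robot at positive distance, which the lemma guarantees. Lemma~\ref{NoMoreThan2RobotsMeeting} moreover ensures that towers at $v$ and $v'$ involve at most two robots, preserving the sentinels' forever-at-extremity guarantee throughout the ping-pong argument.
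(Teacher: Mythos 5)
Your proposal is correct and follows essentially the same route as the paper's proof: install the sentinels via Lemma~\ref{2RobotsForeverOnAdjacentNodesOfAMissingEdge}, then iterate Lemma~\ref{ReachabilityOfAdjacentNodesOfAMissingEdge} to sweep an explorer back and forth between the two extremities, with Rule~3 flipping its direction at each meeting. Your explicit handling of the possible identity switch of the $gd$-carrying robot across iterations is a welcome clarification of a point the paper leaves implicit, but it does not change the argument.
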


\begin{proof}
 Assume that $\mathcal{G}$ has an eventual missing edge $e$ that is missing 
 forever after time $t$. By 
 Lemma~\ref{2RobotsForeverOnAdjacentNodesOfAMissingEdge}, there exists a time 
 $t' \geq t$ after which two robots $r_{1}$ and $r_{2}$ are respectively located
 on the two extremities of $e$ and pointing to $e$.
 As there are at least 3 robots, let $r$ be a robot (located on a 
 node $u$ considering a global direction $gd$) such that $r \neq r_{1}$ and
 $r \neq r_{2}$. Let $v$ be the extremity of $e$ in the direction $gd$ of $u$
 and $v'$ be the other extremity of $e$.

 Applying recurrently Lemma~\ref{ReachabilityOfAdjacentNodesOfAMissingEdge},
 we prove that, in finite time, all the nodes between node $u$ and $v$ in the 
 global direction $gd$ are visited and that a robot 
 reaches $v$. When this robot reaches $v$, it changes its direction (hence 
 considers $\overline{gd}$) by construction of \PEFRRR~since it moves during the 
 previous step and forms a tower.

 We can then repeat this reasoning (with $v$ and $v'$ alternatively in the role 
 of $u$ and with $v'$ and $v$ alternatively in the role of $v$) and prove that all nodes 
 are infinitely often visited.
\end{proof}

Lemmas~\ref{noMeetingPerpetualExplorationSolved}, 
\ref{NormalCasePerpetualExplorationSolved}, and
\ref{OneEdgeMissingForeverPerpetualExplorationSolved} directly imply the 
following result: 

\begin{theorem} \label{final_theorem}
 \PEFRRR~is a perpetual exploration algorithm for the class of connected-over-time
 rings of arbitrary size strictly greater than the number of robots
 using an arbitrary number (greater than or equal to 3) of fully synchronous robots.
\end{theorem}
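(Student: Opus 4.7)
The plan is to close the theorem with a simple case analysis on two orthogonal binary features of a well-initiated execution $\varepsilon$ of \PEFRRR{} on a connected-over-time ring $\mathcal{G}$ of size strictly greater than $k \geq 3$: first, whether or not any tower is ever formed in $\varepsilon$, and second, whether or not $\mathcal{G}$ contains an eventual missing edge. These two dichotomies give four combinations, one of which is ruled out by Lemma~\ref{eventualMissingEdgeMeetingHappens}, since the presence of an eventual missing edge forces at least one tower to appear. Three cases remain to be handled, and each one is already resolved by a preceding lemma.

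Concretely, I would proceed as follows. If no tower is formed in $\varepsilon$, I invoke Lemma~\ref{noMeetingPerpetualExplorationSolved} directly to conclude that every node is visited infinitely often. If at least one tower is formed but $\mathcal{G}$ has no eventual missing edge, I apply Lemma~\ref{NormalCasePerpetualExplorationSolved}. Finally, if $\mathcal{G}$ has an eventual missing edge (in which case Lemma~\ref{eventualMissingEdgeMeetingHappens} already guarantees the existence of a tower), I apply Lemma~\ref{OneEdgeMissingForeverPerpetualExplorationSolved}. In all three cases the perpetual exploration specification from Section~\ref{sec:model} is satisfied.

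Since $\varepsilon$ and $\mathcal{G}$ were arbitrary (with the only constraints that $\mathcal{G}$ is a connected-over-time ring of size $n > k$, that $k \geq 3$, and that $\varepsilon$ is well-initiated), the theorem follows. There is essentially no obstacle at this stage: all of the technical work—controlling the number of robots involved in any tower (Lemma~\ref{NoMoreThan2RobotsMeeting}), ensuring opposite directions after tower formation (Lemma~\ref{oppositeDirectionAfterTower}), propagating a considered direction across recurrent edges (Lemma~\ref{ReachabilityOfAdjacentNodesOfAMissingEdge}), and stabilizing sentinels at the endpoints of an eventual missing edge (Lemma~\ref{2RobotsForeverOnAdjacentNodesOfAMissingEdge})—has already been carried out, so the proof of the theorem itself reduces to bookkeeping over the three surviving cases.
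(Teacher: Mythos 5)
Your proposal matches the paper exactly: the theorem is stated as a direct consequence of Lemmas~\ref{noMeetingPerpetualExplorationSolved}, \ref{NormalCasePerpetualExplorationSolved}, and \ref{OneEdgeMissingForeverPerpetualExplorationSolved}, which cover precisely the three cases you enumerate (the fourth being excluded by Lemma~\ref{eventualMissingEdgeMeetingHappens}). The case analysis and the appeal to the supporting lemmas are both correct and identical in structure to the paper's argument.
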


   \section{With Two Robots}\label{sec:2robots}

In this section, we study the perpetual exploration of rings of any size with 
two robots. We first prove that two 
robots are not able to perpetually explore connected-over-time rings of size 
strictly greater than three (refer to Theorem~\ref{no_perpetual_exploration_two_robots}).
Then, we provide \PEFRR~(see Theorem \ref{th:algo2robots}),
an algorithm using two robots that solves
the perpetual exploration on the remaining case, \ie connected-over-time rings of size three.

\subsection{Connected-over-Time Rings of Size $4$ or More}\label{sub:imp2robots}

  \begin{figure*}
 	\begin{center}
 		\includegraphics[scale=0.7]{./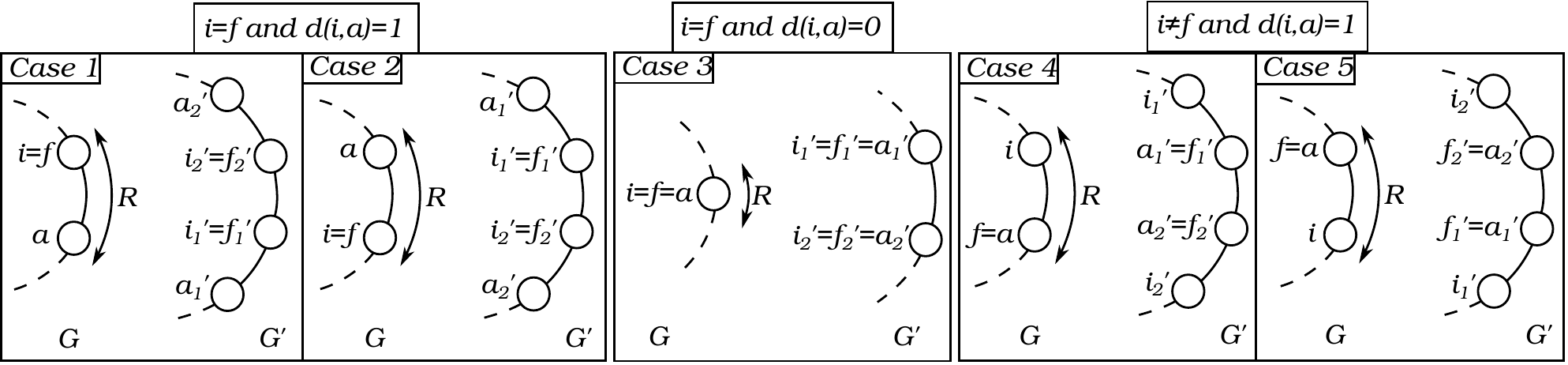}
 	\end{center}
 	\caption{Construction of $\mathcal{G}'$ in proof of Lemma 
            \ref{lemma_modification_direction_bis}.} \label{impossibility_ter}
 \end{figure*}

The proof of our impossibility result presented in Theorem 
\ref{no_perpetual_exploration_two_robots} makes use of a generic framework 
proposed in \cite{BDKP16}. Note that, even if this generic framework is designed 
for another model (namely, the classical message passing model), it is straightforward to
borrow it for our current model. Indeed, its proof only relies on the determinism of
algorithms and indistinguishability of dynamic graphs, these arguments being directly 
translatable in our model. We present briefly this framework here. The interested 
reader is referred to \cite{BDKP16} for more details. 

This framework is based on a theorem that ensures that, if we take a sequence of 
evolving graphs with ever-growing common prefixes (that hence converges to the evolving 
graph that shares all these common prefixes), then the sequence of corresponding 
executions of any deterministic algorithm also converges. 
Moreover, we are able to describe the execution to which 
it converges as the execution of this algorithm on the evolving graph to which the 
sequence converges.
This result is useful since it allows 
us to construct counter-example in the context of impossibility results. Indeed, it is
sufficient to construct an evolving graphs sequence (with ever-growing common prefixes) 
and to prove that their corresponding execution violates the specification of the problem 
for ever-growing time to exhibit an execution that never satisfies the 
specification of the problem.

In order to build the evolving graphs sequence suitable for the proof of our impossibility 
result, we need the following technical lemma.

\begin{lemma} \label{lemma_modification_direction_bis}
 Let $\mathcal{A}$ be a perpetual exploration algorithm in connected-over-time
 ring of size $4$ or more using 2 robots. Any execution of $\mathcal{A}$ satisfies:
 For any time $t$ and any robot state $s$, if, at time $t$,
 the robots have not explored the whole ring, have not formed a tower, and each
 robot has only visited at most two adjacent nodes, then there exists $t' \geq t$
 such that a robot located on a node $u$, on state $s$ at time $t$, and 
 satisfying $OneEdge(u, t, t')$ leaves $u$ at time $t'$.
\end{lemma}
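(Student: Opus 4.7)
The plan is to argue by contradiction using the convergence framework of~\cite{BDKP16} recalled above. Suppose the statement fails: there exist an execution~$\varepsilon$ of~$\mathcal{A}$ on some connected-over-time ring~$\mathcal{G}$, a time~$t$, a state~$s$, and a node~$u$ satisfying the three hypotheses, yet no $t' \geq t$ works. Then for every $t' \geq t$ one can exhibit an evolving graph $\mathcal{G}_{t'}$ that agrees with $\mathcal{G}$ up to time~$t$, satisfies $OneEdge(u, t, t')$, and in whose induced execution of $\mathcal{A}$ the robot at~$u$ in state~$s$ does not move during $[t, t']$. Applying the convergence theorem of~\cite{BDKP16} to the sequence $(\mathcal{G}_{t'})_{t' \geq t}$ of evolving graphs with ever-growing common prefixes yields a limit evolving graph $\mathcal{G}'$ that agrees with $\mathcal{G}$ up to~$t$, satisfies $OneEdge(u, t, t'')$ for every $t'' \geq t$, and in whose induced execution the robot originally at~$u$ never moves after time~$t$.

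Next, I would complete the construction of $\mathcal{G}'$ (see Figure~\ref{impossibility_ter}) so that it is a connected-over-time ring of the same size as $\mathcal{G}$. Its eventual underlying graph would be the ring minus the always-missing edge adjacent to~$u$, which is a spanning chain and thus connected, so $\mathcal{G}'$ is indeed connected-over-time. Moreover, by tuning the dynamics of the edges not adjacent to~$u$ via an analogous $OneEdge$-style adversarial schedule around the position of the second robot~$r'$, one confines $r'$ to a region of the ring avoiding some node~$z$ that is unvisited at time~$t$. Such a~$z$ exists because the ring has at least four nodes, each robot has visited at most two adjacent nodes before~$t$ (hence at most four in total), and the whole ring is not yet explored.

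In the resulting $\mathcal{G}'$, the robot~$r$ is stuck at~$u$ and $r'$ never reaches~$z$, so~$z$ is never visited after time~$t$. This contradicts the assumption that $\mathcal{A}$ is a perpetual exploration algorithm for connected-over-time rings, completing the proof. The main obstacle is the refinement described in the previous paragraph: one must design the post-$t$ dynamics of the non-$u$ edges so that $r'$'s deterministic trajectory avoids~$z$ while every such edge remains infinitely often present (to preserve connected-over-time). The hypothesis that each robot has visited at most two adjacent nodes at time~$t$ is precisely what leaves enough room in the ring to place~$z$ outside $r'$'s adversarially-restricted range.
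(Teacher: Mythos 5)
There is a genuine gap, and it is exactly at the place you flag as ``the main obstacle'': the treatment of the second robot. The contradiction hypothesis you negate only constrains a robot that is \emph{in state $s$} on a node satisfying $OneEdge$; it says nothing whatsoever about the behaviour of the other robot $r'$, which at time $t$ is in some unrelated state. Your plan is to pin $r$ at $u$ by keeping one of its adjacent edges missing forever, and then to ``confine $r'$ to a region avoiding $z$'' by an analogous adversarial schedule. But once one edge of the ring is eventually missing (the one blocking $r$), the adversary cannot make any other edge eventually missing without disconnecting the eventual underlying graph, i.e.\ without leaving the class of connected-over-time rings --- and then the contradiction with perpetual exploration no longer applies. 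With every other edge infinitely often present, a deterministic $r'$ about which you know nothing may simply keep walking and visit every node reachable in the chain, including $z$ and $u$. So the confinement step is not a refinement to be filled in later; it is unachievable from the hypotheses you have, and the first paragraph's limit construction (which is in any case unnecessary for this lemma --- the negated statement already gives ``never leaves'' for all $t'$ directly) does not help with it.

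The paper's proof supplies precisely the missing device. Instead of working in the original ring, it builds a fresh $8$-node connected-over-time ring $\mathcal{G}'$ containing two mirror-image copies of $r_1$'s history up to time $t$ (this is where the hypotheses ``no tower yet'' and ``each robot has visited at most two adjacent nodes'' are used: the trajectory is short enough to embed twice), and places the two robots on symmetric starting nodes with \emph{opposite chirality}. Determinism then forces the two robots to act as mirror images of each other, a parity argument on their (always odd) distance shows they never meet, and at time $t$ \emph{both} robots are in state $s$ on two adjacent nodes $f_1'$, $f_2'$. Removing only the single edge $(f_1',f_2')$ forever makes both nodes satisfy $OneEdge(\cdot,t,+\infty)$ with just one eventually missing edge, so $\mathcal{G}'$ stays connected-over-time while both robots are stuck, yielding the contradiction. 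If you want to salvage your write-up, you need to replace the ``confine $r'$'' step by some argument that forces $r'$ into the stubborn state $s$ as well; the symmetric duplication with reversed chirality is the paper's way of doing exactly that.
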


\begin{proof}
 Consider an algorithm $\mathcal{A}$ that deterministically solves the 
 perpetual exploration problem for connected-over-time rings of 
 size $4$ or more using two robots.
 Let $\mathcal{G} = \{G_{0}=(V,E_0), G_{1}=(V,E_1), \ldots\}$ be a connected-over-time ring (of
 size $4$ or more). Let $\varepsilon$ be an execution of $\mathcal{A}$ by two robots
 $r_{1}$ and $r_{2}$ on $\mathcal{G}$.
 
 By contradiction, assume that there exists a time $t$ and a state $s$ such that 
 $(i)$ the exploration of the whole ring has not been done yet; $(ii)$ from time
 $0$ to time $t$ none of the robots have formed a tower; $(iii)$ at time $t$ each
 robot has only visited at most two adjacent nodes of $\mathcal{G}$; and $(iv)$ at 
 time $t$ one of the robot (without lost of generality, 
 $r_{1}$) is in a state $s$ such that, for any $ t' \geq t$, if $r_{1}$ is on a
 node $u$ of $\mathcal{G}$ satisfying $OneEdge(u, t, t')$, then it does not 
 leave $u$ at time $t'$.
 
 Let $\mathcal{R}$ be the set of nodes visited by $r_{1}$ from time
 $0$ to time $t$. Note that, at time $t$, as each robot has only
 visited at most two adjacent nodes, then 
 $1 \leq \textpipe \mathcal{R} \textpipe \leq 2$. Let $i$ (resp. $f$) be the
 node in $\mathcal{G}$ where $r_{1}$ is located at time $0$ 
 (resp. $t$). If $\textpipe \mathcal{R} \textpipe = 2$, let $a$ be the
 node of $\mathcal{R}$ such that $a \neq i$, otherwise (\ie 
 $\textpipe \mathcal{R} \textpipe = 1$) let $a = i$. By assumption, either 
 $f = i$ or $f$ is an adjacent node of $i$ and in this later case $a = f$. 
 
 We construct a connected-over-time ring $\mathcal{G}' = \{G_{0}', G_{1}', \ldots\}$
 (with $G'_i=(V',E'_i)$ for any $i\in\mathbb{N}$) 
 such that the underlying graph of $\mathcal{G}'$ contains 8 nodes
in the following way. Let $i_{1}'$ be an arbitrary node of $\mathcal{G}'$. Let us
construct nodes $i_{2}'$, $a_{1}'$, $a_{2}'$, $f_{1}'$, and $f_{2}'$ of $\mathcal{G}'$
in function of $i_{1}'$ and of nodes $i$, $a$, and $f$ of $\mathcal{G}$ as explained
by Figure~\ref{impossibility_ter}. Note that this construction ensures that $f_{1}'$ 
and $f_{2}'$ are adjacent in $\mathcal{G}'$ in any case.

 We denote by $r(k)$ (resp. $l(k)$) the adjacent edge in the clockwise 
 (resp. counter clockwise) direction of a node $k$.
 For any $j \in \{0,\ldots, t-1\}$, let $E'_j$ be the set $E_{\mathcal{G}'}$ 
 with the following set of additional constraints\footnote{Note that the construction
 of $i_{1}'$, $i_{2}'$, $a_{1}'$, $a_{2}'$, $f_{1}'$, and $f_{2}'$ ensures us that there
 is no contradiction between these constraints in all cases.}:
\[\left\{
\begin{array}{ll}
r(i_{1}') \in E_{j}' \text{ and } l(i_{2}') \in E_{j}' & \text{iff } r(i) \in E_{j}\\
l(i_{1}') \in E_{j}' \text{ and } r(i_{2}') \in E_{j}' & \text{iff } l(i) \in E_{j}\\
r(a_{1}') \in E_{j}' \text{ and } l(a_{2}') \in E_{j}' & \text{iff } r(a) \in E_{j}\\
l(a_{1}') \in E_{j}' \text{ and } r(a_{2}') \in E_{j}' & \text{iff } l(a) \in E_{j}\\
\end{array}
\right.\] 
For any $j \geq t$, let $E'_j$ be the set $E_{\mathcal{G}'}\setminus\{(f_{1}',f_{2}')\}$.

 Now, we consider the execution $\varepsilon'$ of $\mathcal{A}$ on $\mathcal{G}'$ starting
 from the configuration where $r_{1}$ (resp. $r_{2}$) is on node $i_{1}'$
 (resp. on node $i_{2}'$) such that the two robots have opposite
 chirality and that $r_{1}$ have the same chirality as in $\varepsilon$.
 The execution $\varepsilon'$ satisfies the following set of claims.
 
\begin{figure*}
	\begin{center}
		\includegraphics[scale=0.7]{./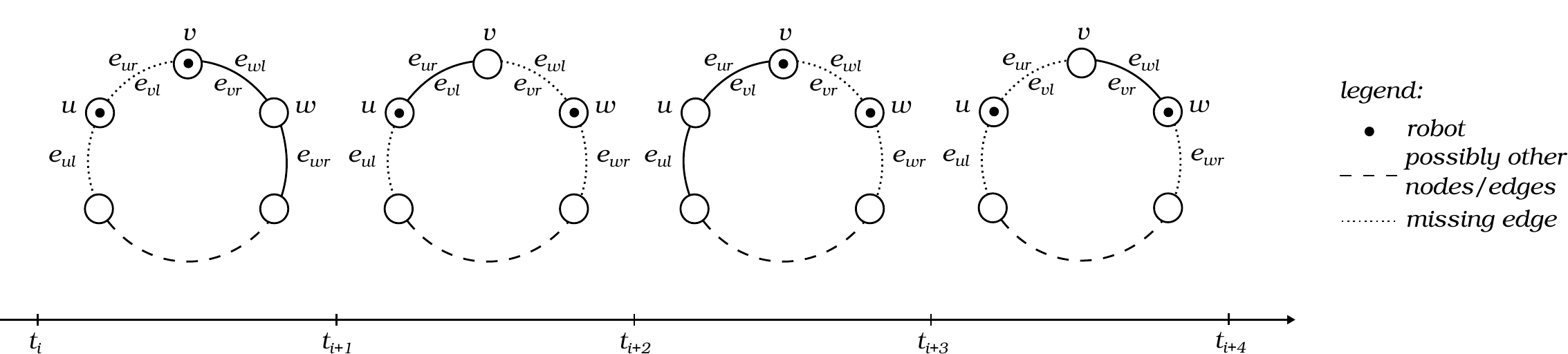}
	\end{center}
	\caption{Construction of $\mathcal{G}_{i + 1}$,  $\mathcal{G}_{i + 2}$, 
 $\mathcal{G}_{i + 3}$, and $\mathcal{G}_{i + 4}$ in proof of Theorem \ref{no_perpetual_exploration_two_robots}.} \label{impossibility_bis}
\end{figure*}

\noindent\textbf{Claim 1:} Until time $t$, $r_{1}$ and 
  $r_{2}$ execute the same actions in a symmetrical way in 
  $\varepsilon'$.
 
  \noindent Consider that, during the Look phase of time $j$, the two robots 
  have the same view in $\varepsilon'$. The two robots have not the same 
  chirality and $\mathcal{A}$ is deterministic, then, during the Move phase of
  time $j$, they are executing the same action in a symmetrical way
  (either not move or move in opposite directions). This implies that, at time 
  $j + 1$, $r_{1}$ and $r_{2}$ have again the same state. 
  
  \noindent There are only two robots executing $\mathcal{A}$ on $\mathcal{G}'$. Hence, if a
  tower is formed, it is composed of $r_{1}$ and $r_{2}$. If from time
  $0$ to time $t$, the robots are executing the same actions in a 
  symmetrical way, then, by construction of $\mathcal{G}'$ and by the way we 
  initially placed $r_{1}$ and $r_{2}$ on $\varepsilon'$, the two robots see
  the same local environment at each instant time in $\{0,\ldots, t\}$.
  
  \noindent At time $0$, by construction of $\mathcal{G}'$ and by the way we 
  placed $r_{1}$ and $r_{2}$ on $\varepsilon'$, the two robots have the same 
  view. 
  
  \noindent By recurrence and using the arguments of the two first paragraphs, we 
  conclude that, from time $0$ to time $t$, $r_{1}$ and $r_{2}$ execute
  the same actions in a symmetrical way in $\varepsilon'$. 
 
\noindent\textbf{Claim 2:} Until time ${t}$, $r_{1}$ and 
  $r_{2}$ never form a tower in $\varepsilon'$.
 
  \noindent By construction of $\varepsilon'$, the two robots are initially 
  at an odd distance. By Claim 1, at a time $0 <j + 1 < t$,
  the two robots are either at the same distance, at a distance increased of
  2, or at a distance decreased of 2 with respect to their distance at time $j$.
  Moreover, since $\mathcal{G}'$ possesses an even number of edges, this implies
  that, until time $t$, the robots are always at an odd distance from each other.

\noindent\textbf{Claim 3:} Until time ${t}$, $r_{1}$ executes in
  $\varepsilon'$ the same sequence of actions than in $\varepsilon$.
  
  \noindent Consider that, during the Look phase of time $j$, $r_{1}$ has the 
  same view in $\varepsilon$ and in $\varepsilon'$. As $\mathcal{A}$ is 
  deterministic, then, during the Move phase of time $j$, $r_{1}$ 
  executes the same action (either not move, or move in the same direction) in
  $\varepsilon$ and in $\varepsilon'$. This implies that, during the Look phase 
  of time $j + 1$, $r_{1}$ possesses the same state in $\varepsilon$ 
  and in $\varepsilon'$. 
 
  \noindent By assumption, until time $t$, there is no tower in $\varepsilon$.
  By Claim 2, there is no tower in $\varepsilon'$ until time $t$.
  Hence, in the case where $r_{1}$ executes 
  the same actions in $\varepsilon$ and in $\varepsilon'$ from time $0$ to time $t$, 
  $r_{1}$ sees the same local environment in $\varepsilon$ and in $\varepsilon'$ 
  until time $t$ (by construction of $\mathcal{G}'$ and the initial location of
  $r_{1}$ in $\varepsilon'$).

  \noindent At time $0$, $r_{1}$ has the same view in $\varepsilon$ and in
  $\varepsilon'$ (by construction of $\mathcal{G}'$ and the initial location of
  $r_{1}$ in $\varepsilon'$).
  
  \noindent By recurrence and using the arguments of the two first paragraphs, we 
  conclude that, from time $0$ to time $t$, $r_{1}$ executes the same
  actions in $\varepsilon$ and in $\varepsilon'$.
  
\noindent\textbf{Claim 4:} At time ${t}$, $r_{1}$ and $r_{2}$
  are on two adjacent nodes in $\varepsilon'$ and are both in state $s$.
  
  \noindent By Claims 1 and 3 and  by construction of $\mathcal{G'}$, 
  we know that at time $t$, $r_{1}$ is on node $f_{1}'$ while $r_{2}$ 
  is on node $f_{2}'$. These nodes are adjacent by construction of $\mathcal{G'}$.
 
 \noindent By Claim 1, as $r_{1}$ and $r_{2}$ have opposite chirality, they have 
 the same state at time $t$ in $\varepsilon'$. By Claim 3, $r_{1}$ is in the same state 
 at time $t$ in $\varepsilon$ and in $\varepsilon'$. Since $r_1$ is in state $s$ at 
 time $t$ in $\varepsilon$ by assumption, we have the claim.

 By construction of $\mathcal{G}'$, $f_{1}'$ (resp.
 $f_{2}'$) satisfies the property $OneEdge(f_{1}', t, +\infty)$ (resp.
 $OneEdge(f_{2}',$ $ t, +\infty)$). Then, by assumption, $r_{1}$ 
 (resp. $r_{2}$) does not leave node $f_{1}'$ (resp. $f_{2}'$) 
 after time $t$. As $\mathcal{G}'$ counts 8 nodes, we obtain a contradiction with 
 the fact that $\mathcal{A}$ is a deterministic algorithm solving the perpetual 
 exploration problem for connected-over-time rings using two robots.
\end{proof}

\begin{theorem} \label{no_perpetual_exploration_two_robots}
 There exists no deterministic algorithm satisfying the perpetual exploration
 specification on the class of connected-over-time rings of size $4$ or more
 with two fully synchronous robots.
\end{theorem}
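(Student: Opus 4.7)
The plan is to argue by contradiction, using the generic convergence framework of~\cite{BDKP16} together with Lemma~\ref{lemma_modification_direction_bis}. Assume that $\mathcal{A}$ perpetually explores every connected-over-time ring of size at least $4$ with two robots. I would build a sequence $(\mathcal{G}_i)_{i\geq 0}$ of connected-over-time rings, all on the same $8$-node underlying cycle as in the proof of Lemma~\ref{lemma_modification_direction_bis}, together with an increasing sequence of times $(t_i)_{i\geq 0}$ such that $\mathcal{G}_i$ and $\mathcal{G}_{i+1}$ coincide on $[0,t_i]$ and such that, in each corresponding execution $\varepsilon_i$ of $\mathcal{A}$, the following invariant holds at time $t_i$: no tower has been formed, each robot has visited only a pair of two adjacent nodes, and at least one node of the ring has never been visited. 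By the convergence theorem of~\cite{BDKP16}, the sequence $(\mathcal{G}_i)$ would converge to a connected-over-time evolving graph $\mathcal{G}_\infty$ on which the execution of $\mathcal{A}$ leaves some node unvisited forever, contradicting the perpetual exploration specification.

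For the base case, I would take $\mathcal{G}_0 = \mathcal{G}'$ and $t_0 = t$ from the proof of Lemma~\ref{lemma_modification_direction_bis}, with the two robots placed at $i_1',i_2'$ with opposite chiralities. The four internal claims of that proof directly give the invariant at time $t_0$: the robots sit symmetrically on the adjacent nodes $f_1',f_2'$, no tower has been formed, and at least four of the eight nodes are still unvisited.

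The inductive step is where Lemma~\ref{lemma_modification_direction_bis} is reused. With the invariant in force at~$t_i$, I would apply the lemma to each robot to obtain the earliest times $t_i^{(1)},t_i^{(2)}\geq t_i$ at which each robot is \emph{forced} to leave its current node through its only present adjacent edge. I then define $\mathcal{G}_{i+1}$ to coincide with $\mathcal{G}_i$ on $[0,t_i]$ and, on the interval $[t_i,t_{i+1}]$ with $t_{i+1}:=\max(t_i^{(1)},t_i^{(2)})+1$, I schedule the edges according to the four sub-constructions sketched in Figure~\ref{impossibility_bis}: the edge currently missing between the two robots is revived (so that it becomes recurrent in the limit); once each robot has traversed its unique present edge, the just-traversed edge is hidden and the opposite edge of the destination is declared missing, so that the robot is trapped again on the very same pair of two adjacent nodes as before. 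By symmetry and opposite chiralities the two robots synchronously oscillate within fixed pairs of two adjacent nodes, so no tower is formed, the visited set of each robot does not grow, and the same four nodes remain unvisited throughout.

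The main obstacle is to carry out this bookkeeping consistently. On the one hand, I must verify that each $\mathcal{G}_i$, and in particular the limit $\mathcal{G}_\infty$, is connected-over-time: every edge of the $8$-ring declared missing in some $\mathcal{G}_i$ must be revived in every subsequent $\mathcal{G}_j$ with $j>i$, so that it stays recurrent in the limit. On the other hand, I must check that the three hypotheses of Lemma~\ref{lemma_modification_direction_bis} (no tower, each robot's visited set of size at most two adjacent nodes, exploration still incomplete) are preserved at every $t_i$, which is precisely what the oscillating scheduling above is engineered to enforce. Once both points are verified, the convergence theorem of~\cite{BDKP16} delivers the announced contradiction.
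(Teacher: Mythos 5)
Your overall strategy---iterating the technical lemma to build a sequence of connected-over-time rings with ever-growing common prefixes on which the robots stay confined, then invoking the convergence theorem of \cite{BDKP16}---is exactly the strategy of the paper. However, there is a genuine gap in your base case. You propose to ``take $\mathcal{G}_0 = \mathcal{G}'$ and $t_0 = t$ from the proof of Lemma~\ref{lemma_modification_direction_bis}'' and to import its four internal claims. But $\mathcal{G}'$, the time $t$, the state $s$, and the positions $f_1', f_2'$ are all defined \emph{inside a proof by contradiction}, relative to an execution $\varepsilon$ that is assumed to violate the lemma's conclusion (in particular, assumption $(iii)$ there---each robot has visited at most two adjacent nodes up to time $t$---is part of the hypothesis being refuted, not something you can conjure for free). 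Once the lemma is proved, those objects are no longer available; the only thing you may use is the lemma's positive conclusion. Concretely: if you start $\mathcal{A}$ on an $8$-node ring with all edges present, nothing prevents the robots from visiting five or more nodes before you ever get to apply the lemma, so the invariant ``each robot has visited only a pair of two adjacent nodes'' does not hold at any $t_0$ you can exhibit without further work. The paper avoids this by taking $\mathcal{G}_0$ to be the fully static ring, $t_0 = 0$, and robots on adjacent nodes $u,v$, where all invariants hold vacuously, and by removing edges from the very first round so that the robots never escape the three consecutive nodes $u,v,w$.

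A second, smaller gap is in your inductive step. The conclusion of Lemma~\ref{lemma_modification_direction_bis} gives, for each state $s$, a time $t'$ such that the robot leaves \emph{provided} $OneEdge(u,t,t')$ holds over the whole interval; with two robots in possibly different states you get two different deadlines $t_i^{(1)} \neq t_i^{(2)}$, and you must specify the edge schedule for the early mover between $t_i^{(1)}+1$ and $t_i^{(2)}$ while still certifying that no tower forms and neither visited set grows. You wave at this with ``by symmetry the two robots synchronously oscillate,'' but symmetry of the executions is a property you would have to re-prove by induction on the edge schedule (as Claim~1 of the lemma's proof does for its own construction); it is not automatic. The paper sidesteps the whole issue by moving only one robot per phase: in each of its four steps, one robot sits on a node satisfying $OneEdge$ while the other has both adjacent edges removed and therefore cannot move, so no simultaneity or symmetry argument is needed. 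If you rebuild your base case from a static initial graph and either prove the symmetry invariant or adopt the one-robot-at-a-time scheduling, your confinement region (two disjoint pairs of adjacent nodes in an $8$-ring, rather than three consecutive nodes in an arbitrary ring of size at least $4$) would work just as well for the final contradiction.
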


\begin{proof}
 By contradiction, assume that there exists a deterministic algorithm $\mathcal{A}$ 
 satisfying the perpetual exploration specification on any connected-over-time ring
 of size $4$ or more with two robots $r_{1}$ and $r_{2}$.

 Consider the connected-over-time graph $\mathcal{G}$ whose underlying graph 
 $U_{\mathcal{G}}$ is a ring of size strictly greater than 3 such that all 
 the edges of $U_{\mathcal{G}}$ are present at each time.
 
 Consider three nodes $u$, $v$ and $w$ of $\mathcal{G}$, such that node $v$ is 
 the adjacent node of $u$ in the clockwise direction, and $w$ is the adjacent 
 node of $v$ in the clockwise direction.
 We denote respectively $e_{ur}$ and $e_{ul}$ the clockwise and counter clockwise
 adjacent edges of $u$, $e_{vr}$ and $e_{vl}$ the clockwise and counter 
 clockwise adjacent edges of $v$, and $e_{wr}$ and $e_{wl}$ the clockwise and
 counter clockwise adjacent edges of $w$. Note that $e_{ur} = e_{vl}$ and 
 $e_{vr} = e_{wl}$.
 
 Let $\varepsilon$ be the execution of $\mathcal{A}$ on $\mathcal{G}$ starting
 from the configuration where $r_{1}$ (resp. $r_{2}$) is located on node $u$ 
 (resp. $v$).
 
 We construct a sequence of connected-over-time graphs 
 ($\mathcal{G}_{n}$)$_{n \in \mathbb{N}}$ such that
 $\mathcal{G}_{0} = \mathcal{G}$ and for any $i \geq 0$, $\mathcal{G}_{i}$ is 
 defined as follows (denote by $\varepsilon_i$ the execution of $\mathcal{A}$ on 
 $\mathcal{G}_{i}$ starting from the same configuration as $\varepsilon$). 
 We define inductively $\mathcal{G}_{i + 1}$,  $\mathcal{G}_{i + 2}$, 
 $\mathcal{G}_{i + 3}$, and $\mathcal{G}_{i + 4}$ using Items 1-8 above 
 (see also Figure~\ref{impossibility_bis}) under the assumption that:
 $(i)$ $\mathcal{G}_{i}$ exists for a given $i \in \mathbb{N}$ multiple of 4; 
 $(ii)$ $\mathcal{G}_{i}$ is a connected-over-time ring;
 $(iii)$ there exists a time $t_{i}$ such that each robot has only visited at
 most two adjacent nodes among $\{u, v, w\}$ in $\varepsilon_i$;
 $(iv)$ before time $t_{i}$, the two robots never form a tower in $\varepsilon_i$; and
 $(v)$ at time $t_{i}$, $r_{1}$ (resp. $r_{2}$) is located on node $u$ (resp. $v$). 
 
 \begin{enumerate}[leftmargin=0cm,itemindent=.5cm,labelwidth=\itemindent,labelsep=0cm,align=left]
  \item Due to assumptions $(ii)$ to $(v)$, Lemma~
  \ref{lemma_modification_direction} implies that there exists a time 
  $t_{i}' \geq t_{i}$ such that $r_{2}$ leaves $v$ at time $t_{i}'$ if $r_{2}$ is
  located on node $v$ at time $t_{i}$ and $v$ satisfies $OneEdge(v, t_{i}, t_{i}')$. 

  We then define $\mathcal{G}_{i + 1}$ such that
  $U_{\mathcal{G}_{i + 1}} = U_{\mathcal{G}_{i}}$ and $\mathcal{G}_{i + 1} = \mathcal{G}_{i} 
  \backslash \{(e_{ul}, \{t_{i}, \ldots, t_{i}'\}),$ 
  $(e_{vl}, \{t_{i}, \ldots, t_{i}'\})\}$.

  Note that $\mathcal{G}_{i}$ and $\mathcal{G}_{i + 1}$ are indistinguishable for robots 
  before time $t_{i}$. This implies that, at time $t_{i}$, $r_{1}$ (resp. $r_{2}$) is 
  on node $u$ (resp. $v$) in $\varepsilon_{i + 1}$. 
  By construction of $t_{i}'$, $r_{2}$ leaves $v$ at time
  $t_{i}'$ in $\varepsilon_{i + 1}$. Since, at time $t_{i}'$, among the adjacent edges
  of $v$, only $e_{vr}$ is present in
  $\mathcal{G}_{i + 1}$, $r_{2}$ crosses this edge at this time in $\varepsilon_{i + 1}$. 
  Hence, at time $t_{i}' + 1$, $r_{2}$ is on node $w$ in $\varepsilon_{i + 1}$. Note that
  none of the adjacent edges of $r_{1}$ are present between time $t_{i}$ and
  time $t_{i}'$ in $\mathcal{G}_{i}$. That implies that, at time $t_{i}' + 1$, $r_{1}$ is
  still on node $u$ in $\varepsilon_{i + 1}$. Moreover, this construction ensures us that
  assumptions $(iii)$ and $(iv)$ are satisfied in $\varepsilon_{i + 1}$ until time 
  $t_{i}' + 1$. Finally, $\mathcal{G}_{i + 1}$ is a connected-over-time ring (since it is 
  indistinguishable from $\mathcal{G}$ after $t_{i}' + 1$) and hence satisfies 
  assumption $(ii)$.

  \item Let $t_{i + 1} = t_{i}' + 1$.
  
  \item Using similar arguments as in Item 1, we prove that there
  exists a time $t_{i + 1}'$ such that $r_{1}$ leaves $u$ at time 
  $t_{i + 1}'$ if $r_{1}$ is on node $u$ at time $t_{i+1}$ and $u$ satisfies
  $OneEdge(u, t_{i + 1}, t_{i + 1}')$. We define $\mathcal{G}_{i + 2}$ such that 
  $U_{\mathcal{G}_{i + 2}} = U_{\mathcal{G}_{i + 1}}$ and
  $\mathcal{G}_{i + 2} = \mathcal{G}_{i + 1} 
  \backslash \{(e_{ul}, \{t_{i + 1}, \ldots, t_{i + 1}'\}),$ 
  $(e_{wl}, \{t_{i + 1}, \ldots, t_{i + 1}'\}),$ 
  $(e_{wr}, \{t_{i + 1}, \ldots, t_{i + 1}'\})\}$. 

  That implies that, at time $t_{i + 1}' + 1$, $r_{1}$ (resp. $r_{2}$) is on node $v$ 
  (resp. $w$) in $\varepsilon_{i + 2}$ and that assumptions $(ii)$, $(iii)$, and $(iv)$ are 
  satisfied in $\varepsilon_{i + 2}$ until time $t_{i+1}' + 1$.
  
  \item Let $t_{i + 2} = t_{i + 1}' + 1$.
  
  \item Using similar arguments as in Item 1, we prove that there
  exists a time $t_{i + 2}'$ such that $r_{1}$ leaves $v$ at time 
  $t_{i + 2}'$ if $r_{1}$ is on node $v$ at time $t_{i+2}$ and $v$ satisfies 
  $OneEdge(v, t_{i + 2}, t_{i + 2}')$. We define $\mathcal{G}_{i + 3}$ such that 
  $U_{\mathcal{G}_{i + 3}} = U_{\mathcal{G}_{i + 2}}$ and such that 
  $\mathcal{G}_{i + 3} = \mathcal{G}_{i + 2} 
  \backslash \{(e_{wl}, \{t_{i + 2}, \ldots, t_{i + 2}'\}),$
  $(e_{wr}, \{t_{i + 2}, \ldots, t_{i + 2}'\})\}$.

  That implies that, at time $t_{i + 2}' + 1$, $r_{1}$ (resp. $r_{2}$) is on node $u$ 
  (resp. $w$) in $\varepsilon_{i + 3}$ and that assumptions $(ii)$, $(iii)$, and $(iv)$ are 
  satisfied in $\varepsilon_{i + 3}$ until time $t_{i+2}' + 1$.

  \item Let $t_{i + 3} = t_{i + 2}' + 1$.
  
  \item Using similar arguments as in Item 1, we prove that there
  exists a time $t_{i + 3}'$ such that $r_{2}$ leaves $w$ at time 
  $t_{i + 3}'$ if $r_{2}$ is on node $w$ at time $t_{i+3}$ and $w$ satisfies 
  $OneEdge(w, t_{i + 3}, t_{i + 3}')$. We define $\mathcal{G}_{i + 4}$ such that 
  $U_{\mathcal{G}_{i + 4}} = U_{\mathcal{G}_{i + 3}}$ and such that 
  $\mathcal{G}_{i + 4} = \mathcal{G}_{i + 3} 
  \backslash \{(e_{ul}, \{t_{i + 3}, \ldots, t_{i + 3}'\}),$ 
  $(e_{ur}, \{t_{i + 3}, \ldots, t_{i + 3}'\}),$
  $(e_{wr}, \{t_{i + 3}, \ldots, t_{i + 3}'\})\}$. 

  That implies that, at time $t_{i + 3}' + 1$, $r_{1}$ (resp. $r_{2}$) is on node $u$ 
  (resp. $v$) in $\varepsilon_{i + 4}$ and that assumptions $(ii)$, $(iii)$, and $(iv)$ are 
  satisfied in $\varepsilon_{i + 4}$ until time $t_{i+3}' + 1$.
  
  \item Let $t_{i + 4} = t_{i + 3}' + 1$.
 \end{enumerate}
 
 
 Note that $\mathcal{G}_{0}$ trivially satisfies assumptions $(i)$ to $(v)$ for $t_0=0$ 
 (since $\varepsilon_0=\varepsilon$ by construction). Also, given a $\mathcal{G}_{i}$
 with $i \in \mathbb{N}$ multiple of 4, $\mathcal{G}_{i+4}$ exists and we proved
 that it satisfies assumptions $(ii)$ to $(v)$. 
 In other words, ($\mathcal{G}_{n}$)$_{n \in \mathbb{N}}$ is well-defined.


 We define the evolving graph $\mathcal{G}_{\omega}$ such that
 $U_{\mathcal{G}_{\omega}} = U_{\mathcal{G}_{0}}$ and
\[
\begin{array}{r@{}c@{}l}
\mathcal{G}_{\omega} = \mathcal{G}_{0} \backslash \{ & (e_{ul}, & \{t_{4i}, \ldots, t_{4i}'\} \cup\{t_{4i + 1}, \ldots t_{4i + 1}'\}\cup\{t_{4i + 3}, \ldots, t_{4i + 3}'\}),\\
 & (e_{vl}, & \{t_{4i}, \ldots, t_{4i}'\}\cup\{t_{4i + 3}, \ldots, t_{4i + 3}'\}),\\
 & (e_{wl}, & \{t_{4i + 1}, \ldots, t_{4i + 1}'\}\cup\{t_{4i + 2}, \ldots, t_{4i + 2}'\}),\\
 & (e_{wr}, & \{t_{4i + 1},\ldots, t_{4i + 1}'\} \cup\{t_{4i + 2},\ldots, t_{4i + 2}'\}\cup\{t_{4i + 3},\ldots, t_{4i + 3}\})\textpipe i \in \mathbb{N}\}\\
\end{array}
\]

 Note that, for any edge of $\mathcal{G}_{\omega}$, the intervals of times where this
 edge is absent (if any) are finite and disjoint. This edge is so infinitely often present in
 $\mathcal{G}_{\omega}$. Therefore, $\mathcal{G}_{\omega}$ is a connected-over-time ring.
 
 For any $i\in\mathbb{N}$, $\mathcal{G}_{i}$ and $\mathcal{G}_{\omega}$ have a common prefix
 until time $t_{i}'$. As the sequence $(t_{n}$)$_{n \in \mathbb{N}}$ is increasing 
 by construction, this implies that the sequence ($\mathcal{G}_{n}$)$_{n \in \mathbb{N}}$
 converges to $\mathcal{G}_{\omega}$.

 Applying the theorem of \cite{BDKP16}, we obtain that, until time $t_{i}'$, the 
 execution of $\mathcal{A}$ on $\mathcal{G}_{\omega}$ is identical to the one on 
 $\mathcal{G}_{i}$. This implies that, executing $\mathcal{A}$ on 
 $\mathcal{G}_{\omega}$ (of size strictly greater than 3), 
 $r_{1}$ and $r_{2}$ only visit the nodes $u$, $v$, and $w$.
 This is contradictory with the fact that $\mathcal{A}$ satisfies the 
 perpetual exploration specification on connected over time rings of size 
 strictly greater than 3 using two robots. 
\end{proof}

\subsection{Connected-over-time Rings of Size $3$}

In this section, we present \PEFRR, a deterministic algorithm solving the
perpetual exploration on connected-over-time rings of size $3$ with two robots.

This algorithm works as follows. Each robot disposes only of its $dir$ variable.
If at a time $t$, a robot is isolated on a node with only one adjacent edge,
then it points to this edge. Otherwise (\ie none of the adjacent edge is present, both adjacent 
edges are present, or the other robot is present on the same node), the robot keeps its 
current direction.

\begin{theorem}\label{th:algo2robots}
 \PEFRR~is a perpetual exploration algorithm for the class of connected-over-time
 rings of 3 nodes using 2 fully synchronous robots.
\end{theorem}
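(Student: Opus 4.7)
The plan is to argue by contradiction: assume some node $u$ of $\mathcal{G}$ is visited only finitely often, and let $v, w$ be the remaining two nodes of the $3$-ring; then, after some time $t_0$, both robots remain in $\{v, w\}$. Since $\mathcal{G}$ is connected-over-time and has only three edges, at most one edge can be eventually missing, giving three cases.

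The backbone of the argument is a structural invariant that I first establish by induction on time: \emph{in any well-initiated execution of \PEFRR, whenever the two robots form a tower, their $dir$ variables point to opposite global directions}. The base case is vacuous (a well-initiated configuration is towerless). For the inductive step, a tower at time $t{+}1$ can either persist from time $t$ (in which case neither robot moves and dirs are unchanged) or be newly formed. A new tower cannot arise from ``one robot arriving while the other stays'', because the staying robot would be alone with at least one adjacent edge present (the edge used by the arriving robot), and the algorithm forces such an alone robot to move. Hence a new tower must come from two simultaneous arrivals at a common node from its two opposite neighbors, whose globally opposite motions produce opposite dirs at the meeting point.

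I then case-split on which edge is eventually missing (if any). If $(v, w)$ is eventually missing, robots in $\{v, w\}$ eventually cannot cross between $v$ and $w$: an alone robot on $v$ (resp.\ $w$) has only $(u, v)$ (resp.\ $(u, w)$) as possible adjacent edge, and the first presence of that edge triggers the ``isolated with one adjacent edge'' rule to move the robot to $u$; and a tower on $v$ or $w$ has, by the invariant, one dir toward $u$, so the recurrent edge incident to $u$ eventually forces that robot to $u$. If $(u, v)$ is eventually missing (the $(u, w)$ case is symmetric), the invariant combined with the recurrence of $(u, w)$ immediately contradicts any tower on $w$, while a tower on $v$ is broken in finite time by the recurrent $(v, w)$ into a separated configuration; if no edge is eventually missing, all three edges are recurrent. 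The crux of these last two cases is a chirality-independent observation about separated configurations: any swap of the two robots across $(v, w)$ (the only pattern the adversary can use to avoid an immediate move to $u$) leaves both robots with $dir$ pointing toward $u$ at their new nodes, because the local $dir$ variable is preserved across a move and the port-consistent direction ``from $v$ to $w$'' continues around the $3$-ring as ``from $w$ to $u$''. Once both dirs point to $u$, the recurrence of any edge incident to $u$ that is not eventually missing forces a move to $u$, contradicting the assumption. This post-swap dir computation is the main technical obstacle; everything else reduces to the invariant and the recurrence of edges.
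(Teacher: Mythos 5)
Your proposal is correct, and it ultimately rests on the same two facts that drive the paper's own argument: that co-located robots necessarily hold opposite global directions (so towers break), and that after a swap across $(v,w)$ both $dir$ variables point toward the unvisited node $u$, so the recurrence of some edge incident to $u$ forces a visit. The organization, however, is genuinely different. The paper splits on whether towers occur infinitely often: in the affirmative case it simply observes that a fresh tower can only arise from two simultaneous arrivals out of the two distinct neighbours of the meeting node, so every tower formation certifies that all three nodes were visited between times $t-1$ and $t$, and that case closes in one line with no reference to $dir$ at all; only in the eventually-towerless case does it invoke the swap-and-recurrence argument. You instead run a single global contradiction, promote the opposite-directions property to an explicit inductive invariant (the paper asserts it only parenthetically), and split on which edge, if any, is eventually missing. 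Your route is heavier, since it re-derives via the invariant and edge recurrence what the tower-formation observation gives for free, but it is also more careful where the paper is loose: the paper's dichotomy between ``always switching'' and ``always staying'' is not literally exhaustive (mixed behaviours exist), whereas your analysis of separated configurations --- every step is a visit to $u$, a simultaneous swap, or immobility of both robots, the last being incompatible with connected-over-time because every edge of the $3$-ring is incident to $v$ or $w$ --- closes that gap explicitly. When you write the full proof, do state that last degenerate sub-case (no swap ever occurs) explicitly, since your stated crux only fires once a first swap has aligned both $dir$ variables toward $u$.
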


\begin{proof}
Consider any execution of \PEFRR~on any connected-over-time
 ring of size 3 with 2 robots. By the connected-over-time 
assumption, each node has at least one adjacent edge infinitely often present. 
This implies that any tower is broken in finite time (as robots meet only when they 
consider opposite directions and move as soon as
it is possible). Two cases are now possible. 

\noindent\textbf{Case 1:} There exists infinitely often a tower in the execution.

\noindent Note that, if a tower is formed at a time $t$, then the three
nodes have been visited between time $t - 1$ and time $t$. Then, the 
three nodes are infinitely often visited by a robot in this case.

\noindent\textbf{Case 2:} There exists a time $t$ after which the robots are always isolated.

\noindent By contradiction, assume that there exists a time $t'$ such  that a node $u$
 is never visited after $t'$. As the ring has 3
nodes, that implies that, after time $max\{t,t'\}$, either the robots are always switching
their position or they stay on their respective nodes. 

\noindent In the first case, during the Look phase of each time greater than 
$max\{t,t'\}$, the respective 
variables $dir$ of the two robots contain the direction leading to $u$ (since 
it previously move in this direction). As at least one 
of the adjacent edges of $u$ is infinitely often present, a robot crosses it
in a finite time, that is contradictory with the fact that $u$ is not visited after $t'$.

\noindent The second case implies that both adjacent edges to the location of both robots
are always absent after time $t$ (since a robot moves as soon as it is possible), 
that is contradictory with the connected-over-time assumption.

In both cases, \PEFRR~satisfies the perpetual exploration specification.
\end{proof}

   \section{With One Robot}\label{sec:1robot}

 \begin{figure*}
 	\begin{center}
 		\includegraphics[scale=0.7]{./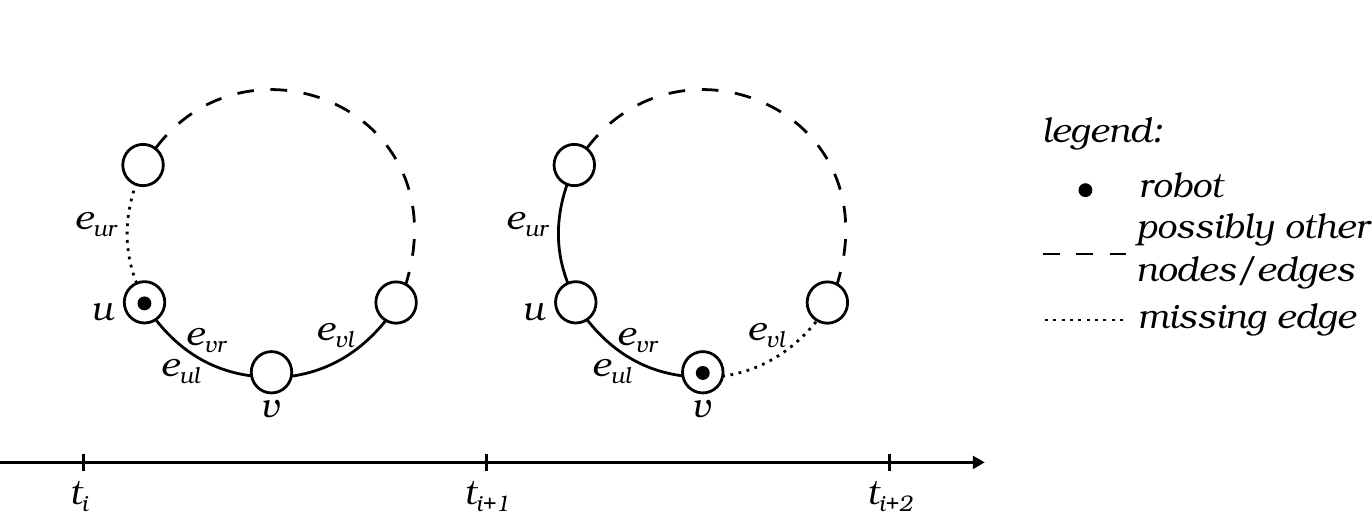}
 	\end{center}
 	\caption{Construction of $\mathcal{G}_{i + 1}$ and $\mathcal{G}_{i + 2}$
    in proof of Theorem \ref{no_perpetual_exploration_one_robot}.} \label{impossibility}
 \end{figure*}

This section leads a similar study than the one of Section~\ref{sec:2robots} but in the case of
the perpetual exploration of rings of any size with a single robot. 
Again, we first prove a negative result since Theorem~
\ref{no_perpetual_exploration_one_robot} states that a single
robot is not able to perpetually explore connected-over-time rings of size 
strictly greater than 2. We then provide \PEFR~(see Theorem \ref{th:algo1robot}),
an algorithm using a single robot that solves the perpetual exploration
on connected-over-time rings of size 2.

\subsection{Connected-over-time Rings of Size $3$ and More} \label{impossibility_1_robot}

Similarly to the previous section, the proof of our impossibility result presented 
in Theorem \ref{no_perpetual_exploration_one_robot} is based on the construction
of an adequate sequence of evolving graphs and the application of the generic framework 
proposed in \cite{BDKP16}.

In order to build the evolving graphs sequence suitable for the proof of our impossibility 
result, we need the following technical lemma.

\begin{lemma} \label{lemma_modification_direction}
 Let $\mathcal{A}$ be a perpetual exploration algorithm in connected-over-time
 ring of size $3$ or more using one robot. Any execution of $\mathcal{A}$ satisfies:
 For any time $t$ and any robot state $s$, there exists a 
 time $t' \geq t$ such that a robot located on a node $u$, on state $s$ at time $t$, and 
 satisfying $OneEdge(u, t, t')$ leaves $u$ at time $t'$.
\end{lemma}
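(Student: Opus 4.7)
The plan is to argue by contradiction. Suppose the lemma is false, so we can fix a time $t$ and a state $s$ such that for every $t' \geq t$ the implication stated in the lemma fails for some execution. I will manufacture from this data a single connected-over-time ring on which the robot is trapped at some node forever, contradicting the perpetual exploration specification that $\mathcal{A}$ is assumed to satisfy.

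If $\mathcal{A}$ admits no execution in which the robot is on some node $u$ in state $s$ at time $t$, the hypothesis of the lemma is never realized and the claim holds vacuously. I may therefore assume that such an execution $\varepsilon$ on a connected-over-time ring $\mathcal{G}$ exists, and let $u$ be the position of the robot at time $t$ in $\varepsilon$. From $\varepsilon$ I build a new evolving graph $\mathcal{G}^{*}$ as follows: up to time $t - 1$ it coincides with $\mathcal{G}$, and from time $t$ on one fixed adjacent edge of $u$ is absent while the other adjacent edge of $u$ and every remaining edge of the ring is present. The eventual underlying graph of $\mathcal{G}^{*}$ is the ring minus a single edge, hence a path, which is connected; thus $\mathcal{G}^{*}$ is a connected-over-time ring of size at least $3$ and satisfies $OneEdge(u, t, \tau)$ for every $\tau \geq t$. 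Because $\mathcal{G}^{*}$ and $\mathcal{G}$ are indistinguishable on $[0, t - 1]$, determinism of $\mathcal{A}$ ensures that the induced execution $\varepsilon^{*}$ of $\mathcal{A}$ on $\mathcal{G}^{*}$ also places the robot on $u$ in state $s$ at time $t$.

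Once the robot is at $u$ in state $s$ at time $t$ in $\varepsilon^{*}$, its local environment while it stays on $u$ is entirely dictated by $OneEdge$, so the sequence of subsequent states and moves computed by $\mathcal{A}$ is a function of $s$ (and of the fixed realization of $OneEdge$) alone. Consequently, the first time at which the robot leaves $u$ in $\varepsilon^{*}$, if such a time $t^{*}$ exists, serves as a universal witness for the lemma: any other execution reaching $u$ in state $s$ at time $t$ with the same realization of $OneEdge$ on $[t, t^{*}]$ produces the identical view at every intermediate step and, by determinism, makes the robot leave at exactly $t^{*}$ there too, contradicting the assumed failure of the lemma at $t' = t^{*}$. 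Therefore, under the contradictory hypothesis, the robot never leaves $u$ in $\varepsilon^{*}$; since $\mathcal{G}^{*}$ has at least three nodes, every node different from $u$ is visited only finitely often in $\varepsilon^{*}$, contradicting the perpetual exploration specification satisfied by $\mathcal{A}$ on $\mathcal{G}^{*}$.

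The main subtlety, which I expect to be the principal obstacle, is bridging the negation ``for every $t'$, some execution violates the implication'' and ``one single execution exhibits a permanently trapped robot''. The construction of $\mathcal{G}^{*}$ together with determinism and the fact that a robot at $u$ with $OneEdge$ holding has a view depending only on its own internal state funnels all potential witnesses into the single canonical execution $\varepsilon^{*}$, which resolves the issue.
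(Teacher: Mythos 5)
Your proof is correct and takes essentially the same route as the paper's: both argue by contradiction, construct a connected-over-time ring that agrees with the original up to time $t-1$ and then permanently removes one adjacent edge of $u$ while keeping every other edge present, use determinism and indistinguishability of the common prefix to place the robot on $u$ in state $s$ at time $t$, and conclude that the robot trapped on $u$ forever contradicts perpetual exploration on a ring of at least three nodes. Your additional care in funnelling the quantifier-heavy negation (``for every $t'$ some execution fails'') into the single canonical execution via determinism is a welcome sharpening of a point the paper glosses over, but it is the same argument.
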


\begin{proof}
 Consider an algorithm $\mathcal{A}$ that solves deterministically the 
 perpetual exploration problem for connected-over-time rings  of size
 $3$ or more using a single robot.
 Let $\mathcal{G} = \{G_{0}=(V,E_0), G_{1}=(V,E_1), \ldots\}$ be a 
 connected-over-time ring (of size $3$ or more). Let $\varepsilon$ be an 
 execution of $\mathcal{A}$ on $\mathcal{G}$ by a robot $r$.
 
 By contradiction, assume that it exists a time $t$ and a state $s$
 such that, for any $t' \geq t$, a robot $r$ located on a node $u$ of 
 $\mathcal{G}$ and in state $s$ at time $t$ with $u$ satisfying $OneEdge(u, t, t')$
 does not leave $u$ at time $t'$.
 
 Let $e$ be an arbitrary adjacent edge to $u$. Let us define the connected-over-time 
 ring $\mathcal{G}' = \{G_{0}'=(V,E'_0), G_{1}'=(V,E'_1), \ldots\}$ such that:
 \[\left\{\begin{array}{ll}
 E_{i}' = E_{i} & \text{if } i < t\\
 E_{i}' = E_{\mathcal{G}}\setminus\{e\} & \text{if } i \geq t\\
 \end{array}\right.
 \]
 Let $\varepsilon'$ be the execution of $\mathcal{A}$ on $\mathcal{G}'$ starting 
 from the same configuration than $\varepsilon$. 

 As $\mathcal{A}$ is a deterministic algorithm, $r$ is in the state $s$ and is
 located on node $u$  at time $t$ in $\varepsilon'$ by construction of $\mathcal{G}'$.
 Note that the node $u$ satisfies the property $OneEdge(u, t, +\infty)$ in $\mathcal{G}'$.
 
 Then, by assumption, $r$ does not leave $u$ in $\varepsilon'$ after time $t$.
 This implies that, after time $t$, only $u$ is visited in $\varepsilon'$.
 As $\mathcal{G}'$ counts $3$ or more nodes, we obtain a contradiction with 
 the fact that $\mathcal{A}$ is a deterministic algorithm solving the perpetual 
 exploration problem for connected-over-time rings using a single robot.
\end{proof}

\begin{theorem} \label{no_perpetual_exploration_one_robot}
 There exists no deterministic algorithm satisfying the perpetual exploration
 specification on the class of connected-over-time rings of size $3$ or more
 with a single fully synchronous robot.
\end{theorem}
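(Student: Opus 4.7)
The plan is to mimic the structure of the proof of Theorem~\ref{no_perpetual_exploration_two_robots}: argue by contradiction, construct a sequence of connected-over-time rings $(\mathcal{G}_n)_{n\in\mathbb{N}}$ with ever-growing common prefixes whose corresponding executions confine the unique robot to only two nodes, take the limit $\mathcal{G}_\omega$, and invoke the indistinguishability/convergence framework of~\cite{BDKP16}. Since there is only one robot, a single ``round trip'' between two nodes requires only two stages rather than four, and the figure referenced in the statement of the theorem (construction of $\mathcal{G}_{i+1}$ and $\mathcal{G}_{i+2}$) reflects this.

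Concretely, I would fix any deterministic $\mathcal{A}$ solving the problem and start from $\mathcal{G}_0$ whose underlying graph is a ring of size $\geq 3$ and in which every edge is present at every time; the robot $r$ is initially located on a node $u$, and I name $v$ the clockwise neighbor of $u$ and $w$ the clockwise neighbor of $v$ (distinct from $u$ since the ring has size $\geq 3$). I let $e_{ul}$, $e_{ur}$, $e_{vl}=e_{ur}$ and $e_{vr}$ denote the obvious adjacent edges. The inductive invariant I maintain for even indices $i$ is: $(i)$ $\mathcal{G}_i$ is a connected-over-time ring, $(ii)$ there is a time $t_i$ such that the prefix of the execution $\varepsilon_i$ of $\mathcal{A}$ on $\mathcal{G}_i$ has visited only $\{u,v\}$ up to time $t_i$, and $(iii)$ at time $t_i$ the robot is located on $u$ in some state $s_i$. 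Given such a $\mathcal{G}_i$, Lemma~\ref{lemma_modification_direction} applied with $t=t_i$ and $s=s_i$ yields a time $t_i'\geq t_i$ such that any robot on $u$ in state $s_i$ at time $t_i$ satisfying $OneEdge(u,t_i,t_i')$ leaves $u$ at $t_i'$. I then define
\[
\mathcal{G}_{i+1}=\mathcal{G}_i\setminus\{(e_{ul},\{t_i,\ldots,t_i'\})\},
\]
so $\mathcal{G}_i$ and $\mathcal{G}_{i+1}$ coincide up to time $t_i$, the robot is still on $u$ in state $s_i$ at time $t_i$ in $\varepsilon_{i+1}$, and by construction $u$ satisfies $OneEdge(u,t_i,t_i')$ in $\mathcal{G}_{i+1}$; hence the robot crosses $e_{ur}$ at $t_i'$ and reaches $v$ at $t_{i+1}:=t_i'+1$ in some state $s_{i+1}$. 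Symmetrically, applying Lemma~\ref{lemma_modification_direction} at $(t_{i+1},s_{i+1})$ gives $t_{i+1}'$ and I set
\[
\mathcal{G}_{i+2}=\mathcal{G}_{i+1}\setminus\{(e_{vr},\{t_{i+1},\ldots,t_{i+1}'\})\},
\]
which forces the robot to leave $v$ through $e_{vl}=e_{ur}$ at $t_{i+1}'$, returning to $u$ at $t_{i+2}:=t_{i+1}'+1$. This re-establishes the invariant and produces an increasing sequence of times $(t_n)$.

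Finally I define the limit evolving graph
\[
\mathcal{G}_\omega=\mathcal{G}_0\setminus\{(e_{ul},\{t_{2i},\ldots,t_{2i}'\})\cup(e_{vr},\{t_{2i+1},\ldots,t_{2i+1}'\})\mid i\in\mathbb{N}\}.
\]
Each of $e_{ul}$ and $e_{vr}$ is removed only on finite, pairwise disjoint intervals, so each remains infinitely often present, and all other edges of the ring are always present; therefore $\mathcal{G}_\omega$ is connected-over-time. Moreover $\mathcal{G}_n$ and $\mathcal{G}_\omega$ share the common prefix up to $t_n'$ by construction, so $(\mathcal{G}_n)_{n\in\mathbb{N}}$ converges to $\mathcal{G}_\omega$. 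Applying the convergence theorem from~\cite{BDKP16}, the execution of $\mathcal{A}$ on $\mathcal{G}_\omega$ agrees with $\varepsilon_n$ up to time $t_n'$ for every $n$, so throughout this execution the robot only ever visits $u$ and $v$. The third node $w$ is never visited, contradicting the fact that $\mathcal{A}$ perpetually explores $\mathcal{G}_\omega$.

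The main obstacle is bookkeeping rather than any deep new idea: I must carefully verify that the successive edge removals are consistent (they never touch a single edge twice in the same time-interval thanks to the increasing $t_n$), that indistinguishability between $\mathcal{G}_i$ and $\mathcal{G}_{i+1}$ before time $t_i$ genuinely propagates the state $s_i$ and the location of $r$, and that Lemma~\ref{lemma_modification_direction} is always applied in a valid regime. Checking that $\mathcal{G}_\omega$ is connected-over-time is the one place where the combinatorics of the removed intervals needs a short but explicit argument; everything else parallels the two-robot impossibility proof with simpler data.
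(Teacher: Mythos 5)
Your proposal is correct and follows essentially the same route as the paper's own proof: the same use of Lemma~\ref{lemma_modification_direction} to force the robot off a node, the same two-stage inductive construction of $(\mathcal{G}_n)$ trapping the robot between $u$ and $v$ (yours is the mirror image, removing $e_{ul}$ and $e_{vr}$ where the paper removes $e_{ur}$ and $e_{vl}$), the same limit graph $\mathcal{G}_\omega$, and the same appeal to the convergence theorem of~\cite{BDKP16}. The only cosmetic difference is that you carry the ``only $\{u,v\}$ visited so far'' condition explicitly in your invariant, whereas the paper leaves it implicit in the prefix-indistinguishability argument.
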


\begin{proof}
 By contradiction, assume that there exists a deterministic algorithm $\mathcal{A}$ 
 satisfying the perpetual exploration specification on any connected-over-time ring
 of size $3$ or more with a single robot $r$.

 Consider the connected-over-time graph $\mathcal{G}$ whose underlying graph 
 $U_{\mathcal{G}}$ is a ring of size strictly greater than 2 such that all 
 the edges of $U_{\mathcal{G}}$ are present at each time.
 Consider any node $u$ of $\mathcal{G}$ and denote respectively by $e_{ur}$ 
 and $e_{ul}$ the clockwise and counter clockwise adjacent edges of $u$.

 Let $\varepsilon$ be the execution of $\mathcal{A}$ on $\mathcal{G}$ starting
 from the configuration where $r$ is located on node $u$.

 We construct a sequence of connected-over-time graphs 
 ($\mathcal{G}_{n}$)$_{n \in \mathbb{N}}$ such that
 $\mathcal{G}_{0} = \mathcal{G}$ and for any $i \geq 0$, $\mathcal{G}_{i}$ is 
 defined as follows (denote by $\varepsilon_i$ the execution of $\mathcal{A}$ on 
 $\mathcal{G}_{i}$ starting from the same configuration as $\varepsilon$).
 We define inductively $\mathcal{G}_{i + 1}$ and $\mathcal{G}_{i + 2}$
 using Items 1-4 above 
 (see also Figure~\ref{impossibility}) under the assumption that:
 $(i)$ $\mathcal{G}_{i}$ exists for a given $i \in \mathbb{N}$ even;
 $(ii)$ $\mathcal{G}_{i}$ is a connected-over-time ring; and 
 $(iii)$ there exists a time $t_{i}$ such that $r$ is located on node $u$ at time $t$ in $\varepsilon_i$. 

 \begin{enumerate}[leftmargin=0cm,itemindent=.5cm,labelwidth=\itemindent,labelsep=0cm,align=left]
  \item Due to assumptions $(ii)$ and $(iii)$, 
  Lemma~\ref{lemma_modification_direction} implies that there exists a time
  $t_{i}' \geq t_{i}$ such that $r$ leaves $u$ at time $t_{i}'$ if it is located on
  node $u$ at time $t_{i}$ and $u$ satisfies $OneEdge(u, t_{i}, t_{i}')$.

  We then define $\mathcal{G}_{i + 1}$ such that
  $U_{\mathcal{G}_{i + 1}} = U_{\mathcal{G}_{i}}$ and 
  $\mathcal{G}_{i + 1} = \mathcal{G}_{i}\backslash \{(e_{ur}, \{t_{i}, \ldots, t_{i}'\})\}$.

  Note that $\mathcal{G}_{i}$ and $\mathcal{G}_{i + 1}$ are indistinguishable for $r$ 
  before time $t_{i}$. This implies that, at time $t_{i}$, $r$ is located on node 
  $u$ in $\varepsilon_{i + 1}$.
  By construction of $t_{i}'$, $r$ leaves $u$ at time $t_{i}'$ in $\varepsilon_{i + 1}$.
  Since, at time $t_{i}'$, among the adjacent edges of $u$, only $e_{ul}$ is present in
  $\mathcal{G}_{i + 1}$, $r$ crosses this edge at this time in $\varepsilon_{i + 1}$. 
  Then, at time $t_{i}' + 1$, $r$ is located on node $v$ (the node adjacent to $u$ 
  in the counter clockwise direction) in $\varepsilon_{i + 1}$.
  Finally, $\mathcal{G}_{i + 1}$ is a connected-over-time ring (since it is 
  indistinguishable from $\mathcal{G}$ after $t_{i}' + 1$) and hence satisfies 
  assumption $(ii)$.
  Denote respectively by $e_{vr}$ and $e_{vl}$ the clockwise and counter clockwise
  adjacent edges of $v$. We have $e_{ul} = e_{vr}$.
  
  \item Let $t_{i + 1} = t_{i}' + 1$. 

  \item Using similar arguments as in Item 1, we prove that there
  exists a time $t_{i + 1}'$ such that $r$ leaves $v$ at time $t_{i + 1}'$ if
  $r$ is located on node $v$ at time $t_{i+1}$ and $v$ satisfies 
  $OneEdge(v, t_{i + 1}, t_{i + 1}')$. We define $\mathcal{G}_{i + 2}$ such that 
  $U_{\mathcal{G}_{i + 2}} = U_{\mathcal{G}_{i + 1}}$ and
  $\mathcal{G}_{i + 2} = \mathcal{G}_{i + 1} 
  \backslash \{(e_{vl}, \{t_{i + 1}, \ldots t_{i + 1}'\})\}$. 

  That implies that, at time $t_{i + 1}' + 1$, $r$ is on node $u$ 
  in $\varepsilon_{i + 2}$ and that assumptions $(ii)$ and $(iii)$ are 
  satisfied in $\varepsilon_{i + 2}$ until time $t_{i+1}' + 1$.
  
  \item Let $t_{i + 2} = t_{i + 1}' + 1$.
 \end{enumerate}
 
 Note that $\mathcal{G}_{0}$ trivially satisfies assumptions $(i)$ to $(iii)$ for $t_0=0$ 
 (since $\varepsilon_0=\varepsilon$ by construction). Also, given a $\mathcal{G}_{i}$
 with $i \in \mathbb{N}$ even, $\mathcal{G}_{i+2}$ exists and we proved
 that it satisfies assumptions $(ii)$ and $(iii)$. 
 In other words, ($\mathcal{G}_{n}$)$_{n \in \mathbb{N}}$ is well-defined.

 We define the evolving graph $\mathcal{G}_{\omega}$ such that
 $U_{\mathcal{G}_{\omega}} = U_{\mathcal{G}_{0}}$ and
\[\begin{array}{r@{}l}
 \mathcal{G}_{\omega} = \mathcal{G}_{0} \backslash \{ & (e_{ur}, \{t_{2i}, \ldots t_{2i}'\}), 
 (e_{vl}, \{t_{2i+1}, \ldots, t_{2i+1}'\}) \textpipe i \in \mathbb{N}\}\\
\end{array}
\]
 Note that, for any edge of $\mathcal{G}_{\omega}$, the intervals of times where this
 edge is absent (if any) are finite and disjoint. This edge is so infinitely often present in
 $\mathcal{G}_{\omega}$. Therefore, $\mathcal{G}_{\omega}$ is a connected-over-time ring.

 For any $i\in\mathbb{N}$, $\mathcal{G}_{i}$ and $\mathcal{G}_{\omega}$ have a common prefix
 until time $t_{i}'$. As the sequence $(t_{n}$)$_{n \in \mathbb{N}}$ is increasing 
 by construction, this implies that the sequence ($\mathcal{G}_{n}$)$_{n \in \mathbb{N}}$
 converges to $\mathcal{G}_{\omega}$.

 Applying the theorem of \cite{BDKP16}, we obtain that, until time $t_{i}'$, the 
 execution of $\mathcal{A}$ on $\mathcal{G}_{\omega}$ is identical to the one on 
 $\mathcal{G}_{i}$. This implies that, executing $\mathcal{A}$ on 
 $\mathcal{G}_{\omega}$ (of size strictly greater than 2), 
 $r$ only visits the nodes $u$ and $v$.
 This is contradictory with the fact that $\mathcal{A}$ satisfies the 
 perpetual exploration specification on connected over time rings of size 
 strictly greater than 2 using one robot. 
\end{proof}

\subsection{Connected-over-time Rings of Size $2$}

In this section, we present \PEFR, a deterministic algorithm solving the
perpetual exploration on connected-over-time rings of size $2$ with a single robot.

Note that a ring of size $2$ can be defined in two different ways. If we consider
that the graph must remain simple, such a ring is reduced to a 2-node chain (\ie only one 
bidirectional edge links the two nodes). Otherwise (\ie the graph may be not simple),
the two nodes are linked by two bidirectional edges. In both cases, 
the following algorithm, \PEFR, trivially works 
as follow: As soon as at least one adjacent edge to the current node of the
robot is present, its variable $dir$ points arbitrarily to one of these edges.

\begin{theorem}\label{th:algo1robot}
 \PEFR~is a perpetual exploration algorithm for the class of connected-over-time
 rings of 2 nodes using a single fully synchronous robot.
\end{theorem}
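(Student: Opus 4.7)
The plan is to observe that in a ring of two nodes, every journey from one node to the other uses an edge adjacent to the starting node at its very first step. Hence the connected-over-time assumption immediately yields the local property that, from any time $t$ onward, at least one edge adjacent to the node currently hosting the robot is present at some time $t' \geq t$. Under this observation, the algorithm \PEFR~is essentially forced to cross an edge at every such opportunity.

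Concretely, I would fix an arbitrary time $t$ at which the robot is located on a node $u$ of the ring, and let $t' \geq t$ be the smallest time at which at least one edge adjacent to $u$ is present in $G_{t'}$ (well-defined by the remark above). During the Look phase of time $t'$, the predicate $ExistsEdge$ returns true for at least one of the robot's two ports, so by the description of \PEFR~the robot sets its variable $dir$ to point to such a present edge. Since a round is atomic in $\mathcal{FSYNC}$, this edge is still present during the Move phase, and the robot traverses it. Because the only neighbour of $u$ in the ring is the other node $v$ (whether the ring has one or two bidirectional edges between $u$ and $v$), the robot is located on $v$ at time $t' + 1$.

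Applying exactly the same reasoning now at $v$, I would obtain a time $t'' > t'$ at which the robot moves from $v$ back to $u$. Iterating this construction builds an infinite increasing sequence of times at which the robot alternates between $u$ and $v$, which immediately implies that both nodes are visited infinitely often in any execution of \PEFR, and hence that the perpetual exploration specification is satisfied.

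The only subtlety worth spelling out is the preliminary translation from the specification of connected-over-time (stated in terms of journeys in the eventual underlying graph) to the strictly local statement actually used in the argument. That translation is immediate in the $n=2$ case, since any journey from $u$ to $v$ must begin by crossing an edge adjacent to $u$; thus if $v$ is infinitely often reachable from $u$, then an adjacent edge of $u$ is present at arbitrarily late times. No other obstacle is expected; the remainder of the proof is routine bookkeeping on the Look-Compute-Move cycle.
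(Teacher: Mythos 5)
Your argument is correct and is precisely the routine reasoning the paper leaves implicit: the paper states Theorem~\ref{th:algo1robot} without a written proof, merely noting that \PEFR~``trivially works'' because the robot points to and crosses any present adjacent edge within the same atomic Look--Compute--Move round. Your write-up, including the observation that connectivity of the eventual underlying graph of a $2$-node ring forces an adjacent edge of the robot's current node to be present at arbitrarily late times, fills in exactly that omitted argument with no gaps.
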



   \section{Conclusion}\label{sec:conclu}

We analyzed the computability of the perpetual 
exploration problem on highly dynamic rings. We proved that three (resp., two) 
robots with very few capacities are necessary to solve the perpetual exploration problem on
connected-over-time rings that include strictly more than three (resp., two) nodes. 
For the completeness of our work, we provided three algorithms: 
One for a single robot evolving in a 2-node ring, one for two robots exploring three nodes, and 
one for three or more robots moving among at least four nodes. 
These three algorithms allow to show that the necessary number of robots is also 
sufficient to solve the problem.

%


   \newpage
   \bibliographystyle{plain}
   \bibliography{biblio}
\end{document}